\providecommand{\tabularnewline}{\\}
\providecommand{\algorithmname}{Algorithm}
\let\oldforeign@language\foreign@language
\DeclareRobustCommand{\foreign@language}[1]{%
	\lowercase{\oldforeign@language{#1}}}
\let\oldforeign@language\foreign@language
\DeclareRobustCommand{\foreign@language}[1]{%
	\lowercase{\oldforeign@language{#1}}}
\newcommand{\MYfooter}{\smash{
		\hfil\parbox[t][\height][t]{\textwidth}{\centering
			\thepage}\hfil\hbox{}}}
\def\ps@IEEEtitlepagestyle{%
	\def\@oddhead{\parbox[t][\height][t]{\textwidth}{\centering \scriptsize
			Personal use of this material is permitted. Permission from the author(s) and/or copyright holder(s), must be obtained for all other uses. Please contact us and provide details if you believe this document breaches copyrights.\\
			\noindent\makebox[\linewidth]{}
		}\hfil\hbox{}}%
	\def\@evenhead{\scriptsize\thepage \hfil \leftmark\mbox{}}%
	\def\@oddfoot{\parbox[t][\height][l]{\textwidth}{
			\vspace{-20pt}{\rule{\textwidth}{0.4pt}}\\ \footnotesize\underline{To cite this article:}
			{\bf{\footnotesize\textcolor{red}{H. A. Hashim "Exponentially Stable Observer-based Controller for VTOL-UAVs without Velocity Measurements," International Journal of Control, vol. 96, no. 8, pp. 1946-1960, 2023.}}} doi: \href{https://doi.org/10.1080/00207179.2022.2079004}{10.1080/00207179.2022.2079004}\\
			\noindent\makebox[\linewidth]
		}\hfil\hbox{}}%
	\def\@evenfoot{\MYfooter}}
\newtheorem{defn}{Definition}
\newtheorem{lem}{Lemma}
\newtheorem{thm}{Theorem}
\newtheorem{rem}{Remark}
\newtheorem{assum}{Assumption}
\begin{document}
	\bstctlcite{IEEEexample:BSTcontrol}

	\title{Exponentially Stable Observer-based Controller for VTOL-UAVs without Velocity Measurements}

\author{Hashim A. Hashim
	\thanks{This work was supported in part by National Sciences and Engineering
		Research Council of Canada (NSERC), under the grants RGPIN-2022-04937 and DGECR-2022-00103.}
	\thanks{H. A. Hashim is with the Department of Mechanical and Aerospace Engineering,
		Carleton University, Ottawa, Ontario, K1S-5B6, Canada, email: hhashim@carleton.ca}
}



\maketitle

\begin{abstract}
There is a great demand for vision-based robotics solutions that can
operate using Global Positioning Systems (GPS), but are also robust
against GPS signal loss and gyroscope failure. This paper investigates
the estimation and tracking control in application to a Vertical Take-Off
and Landing (VTOL) Unmanned Aerial Vehicle (UAV) in six degrees of
freedom (6 DoF). A full state observer for the estimation of VTOL-UAV
motion parameters (attitude, angular velocity, position, and linear
velocity) is proposed on the Lie Group of $\mathbb{SE}_{2}\left(3\right)\times\mathbb{R}^{3}$ $=\mathbb{SO}\left(3\right)\times\mathbb{R}^{9}$ with almost globally exponentially stable closed loop error signals.
Thereafter, a full state observer-based controller for the VTOL-UAV
motion parameters is proposed on the Lie Group with a guaranteed almost
global exponential stability. The proposed approach produces good
results without the need for angular and linear velocity measurements
(without a gyroscope and GPS signals) utilizing only a set of known
landmarks obtained by a vision-aided unit (monocular or stereo camera).
The equivalent quaternion representation on $\mathbb{S}^{3}\times\mathbb{R}^{9}$ is provided in the Appendix. The observer-based controller is presented
in a continuous form while its discrete version is tested using a
VTOL-UAV simulation that incorporates large initial error and uncertain
measurements. The proposed observer is additionally tested experimentally
on a real-world UAV flight dataset.
\end{abstract}

\begin{IEEEkeywords}
Unmanned aerial vehicle, nonlinear filter algorithm, autonomous navigation, tracking control, feature
measurement, observer-based controller, localization, asymptotic stability.
\end{IEEEkeywords}

\IEEEpeerreviewmaketitle{}

\section{Introduction}

\IEEEPARstart{S}{uccessful} inertial navigation of Unmanned Aerial Vehicles (UAVs),
underwater vehicles, and ground vehicles, among other engineering
applications, requires robust solutions for attitude (orientation),
angular velocity, position, and linear velocity estimation and tracking
control. It has long been recognized that rigid-body motion parameters
cannot be obtained directly, but are instead reconstructed from sensor
measurements. Rigid-body's attitude can be reconstructed algebraically
using inertial-frame observations and corresponding body-frame measurements
\cite{markley1988attitude,mortari2000second} followed by position
reconstruction. However, better attitude and pose estimation solutions
are offered by Kalman filters \cite{markley2003attitude,chang2017indirect,pham2015gain},
nonlinear filters on the \textit{Special Orthogonal Group} $\mathbb{SO}\left(3\right)$
that mimic the true attitude dynamics geometry \cite{zlotnik2016exponential,hashim2019SO3Wiley},
and nonlinear filters on the \textit{Special Euclidean Group} $\mathbb{SE}\left(3\right)$
that mimic the true pose (attitude + position) dynamics geometry \cite{hashim2020SE3Stochastic,moeini2016global}.
The true attitude dynamics representation relies on angular velocity
typically measured by a gyroscope. The true pose, on the other hand,
requires measurements of both angular and linear velocity where the
linear velocity is generally made available by the Global Positioning
System (GPS) sensors \cite{hashim2020SE3Stochastic,hashim2021Navigation,moeini2016global,pesce2020radial,lozano2021pvtol,xie2016state,hashim2020LetterSLAM}.
Hence, obtaining linear velocity in a GPS-denied region poses a challenge
\cite{hashim2021Navigation,scaramuzza2014vision,qin2019autonomous,hashim2021gps}.
Thereby, the filters in \cite{markley2003attitude,chang2017indirect,zlotnik2016exponential,hashim2019SO3Wiley,hashim2020SE3Stochastic,moeini2016global}
are not suitable for pose estimation when a gyroscope fails and/or
a reliable GPS signal is not available. Consequently, multiple GPS-independent
navigation solutions that rely solely on angular velocity, landmark,
and inertial measurements have been proposed for the estimation of
attitude, position, and linear velocity of a vehicle, such as adaptive
Kalman filter \cite{aghili2016robust}, extended Kalman filter \cite{pesce2020radial},
invariant extended Kalman filter on the Lie Group of Extended Special Euclidean Group $\mathbb{SE}_{2}$ \cite{barrau2016invariant}, and
nonlinear stochastic filters on $\mathbb{SE}_{2}$ \cite{hashim2021Navigation,hashim2021gps}.
However, none of the above solutions account for gyroscope failure,
immediate replacement of which may prove challenging and expensive
\cite{Nasa2018Hubble}. As such, full observers that bypass measuring
angular and linear velocity, and provide accurate estimates of attitude,
angular velocity, position, and linear velocity are still lacking.

On the other hand, control of UAVs, in particular quadrotors and Vertical
Take-Off and Landing (VTOL)-UAVs, has drawn attention of the control
community in the recent years. Proposed solutions include backstepping
control \cite{roza2014class}, cascaded control \cite{su2011robust},
sliding mode control \cite{rios2018continuous,besnard2012quadrotor},
a hierarchical design procedure for the position control \cite{drouot2014hierarchical,hua2009control},
and others. The design of the above-mentioned controllers implies
the ready availability of accurate attitude, position, and angular
velocity which can be enabled only by high accuracy and precision
expensive sensors. Owing to large size, expensive sensors are unsuitable
for low-cost small-sized UAVs \cite{hashim2020SE3Stochastic}. Low-cost
UAVs are commonly equipped with low-cost sensors, such as an Inertial
Measurement Unit (IMU) and a vision unit (monocular or stereo camera)
\cite{hashim2021Navigation}. Note that a low-cost IMU provides noisy
angular velocity measurements \cite{hashim2019SO3Wiley}. Therefore,
integrating the above controllers with low-cost vision and IMU units
may lead to undesirable results \cite{hashim2021Navigation}. Alternatively,
the vehicle's orientation, position, and angular velocity can be obtained
by a combination of an Image-Based Visual Servoing (IBVS) algorithm
and an IMU. For instance, trajectory of a VTOL-UAV can be controlled
based on the information supplied by IBVS and an IMU \cite{zheng2016image,chen2019image,mokhtari2006feedback,lee2012autonomous}.
The aforementioned control architecture incorporates two loops, where
the inner loop controls the vehicle's orientation and angular velocity
employing IMU measurements, while the outer loop controls the position
using the thrust calculated based on the vision measurements. Nevertheless,
the state vector of most proposed IBVS solutions relies on Euler angles
which are subject to singularity, and therefore fail to represent
the attitude at certain configurations \cite{shuster1993survey,hashim2019AtiitudeSurvey}.
Furthermore, the solutions reported in \cite{zheng2016image,chen2019image,mokhtari2006feedback,lee2012autonomous}
are only locally stable. Considering the high nonlinearity of the
true motion dynamics of a VTOL-UAV, this paper proposes an observer-based
controller on the Lie Group that represents vehicle's orientation
on $\mathbb{SO}\left(3\right)$ providing a unique, global, and nonsingular
representation of the VTOL-UAV motion.

\paragraph*{Contributions}Motivated by the shortcomings of the existing
literature solutions and the high demand for observer-based controllers,
the contributions of this work are as follows:
\begin{itemize}
	\item[(1)] A nonlinear observer for attitude, angular velocity, position, and
	linear velocity that mimics the true motion dynamics of a VTOL-UAV
	is proposed on the Lie Group of $\mathbb{SE}_{2}\left(3\right)\times\mathbb{R}^{3}$.
	\item[(2)] The proposed observer operates based on measurements obtained from
	a vision unit without the need for angular and linear velocity measurements.
	\item[(3)] The closed loop error signals of the observer are shown to be almost
	globally exponentially stable.
	\item[(4)] A novel control law posed on the Lie Group is proposed. In combination
	with the proposed observer it forms an observer-based controller whose
	closed loop error signals are almost globally exponentially stable.
	\item[(5)] The proposed approach is continuous, and its discrete version is
	tested at a low sampling rate through simulation and experimentally.
\end{itemize}
The proposed observer-based controller allows for successful mission
completion even in case of gyroscope failure. In addition, it is suitable
for both GPS and GPS-denied applications. To the best of our knowledge,
this work is the first to present an observer-based controller on
the Lie Group that mimics the true VTOL-UAV motion dynamics and accurately
estimates attitude, angular velocity, position, and linear velocity.

\paragraph*{Structure}The rest of the paper is organized as follows:
Section \ref{sec:Preliminaries-and-Math} presents preliminaries;
Section \ref{sec:SE3_Problem-Formulation} formulates the problem;
Section \ref{sec:VTOL_Observer} proposes a novel nonlinear observer
for a VTOL-UAV; Section \ref{sec:VTOL_Controller} presents a VTOL-UAV
control strategy; Section \ref{sec:VTOL_Implementation} summarizes
the discrete implementation steps; The robustness of the proposed
approach is validated in Section \ref{sec:SE3_Simulations} through
simulation and experimental results; Finally, Section \ref{sec:SE3_Conclusion}
contains concluding remarks.

Table \ref{tab:Table-of-Notations2} provides some important notation
that will be used throughout the paper.

\begin{table}[t]
	\centering{}\caption{\label{tab:Table-of-Notations2}Nomenclature}
	\begin{tabular}{ll>{\raggedright}p{5.9cm}}
		\toprule 
		\addlinespace[0.1cm]
		$\left\{ \mathcal{B}\right\}$
		/ $\left\{ \mathcal{I}\right\}$ & : & 
		fixed body-frame / fixed inertial-frame\tabularnewline
		\addlinespace[0.1cm]
		$\mathbb{SO}\left(3\right)$ & : & Special Orthogonal Group of order 3\tabularnewline
		\addlinespace[0.1cm]
		$\mathfrak{so}\left(3\right)$ & : & Lie-algebra of $\mathbb{SO}\left(3\right)$\tabularnewline
		\addlinespace[0.1cm]
		$\mathbb{SE}\left(3\right)$ & : & Special Euclidean Group, $\mathbb{SE}\left(3\right)=\mathbb{SO}\left(3\right)\times\mathbb{R}^{3}$\tabularnewline
		\addlinespace[0.1cm]
		$\mathbb{SE}_{2}\left(3\right)$ & : & Extended $\mathbb{SE}\left(3\right)$, $\mathbb{SE}_{2}\left(3\right)=\mathbb{SE}\left(3\right)\times\mathbb{R}^{3}$\tabularnewline
		\addlinespace[0.1cm]
		$\mathbb{S}^{3}$ & : & \noindent Three-unit-sphere\tabularnewline
		\addlinespace[0.1cm]
		$\mathbb{R}^{n\times m}$ & : & $n$-by-$m$ real dimensional space\tabularnewline
		\addlinespace[0.1cm]
		\textcolor{red}{$R$,} $\hat{R}$, and $R_{d}$ & : & \textcolor{red}{True} (\textcolor{red}{unknown}), estimated, and desired attitude, $R,\hat{R},R_{d}\in\mathbb{SO}\left(3\right)$\tabularnewline
		\addlinespace[0.1cm]
		\textcolor{red}{$\Omega$}, $\hat{\Omega}$, and $\Omega_{d}$ & : & \textcolor{red}{True} (\textcolor{red}{unknown}), estimated, and desired angular velocity, $\Omega,\hat{\Omega},\Omega_{d}\in\mathbb{R}^{3}$\tabularnewline
		\addlinespace[0.1cm]
		\textcolor{red}{$P$}, $\hat{P}$, and $P_{d}$ & : & \textcolor{red}{True} (\textcolor{red}{unknown}), estimated, and desired position, $P,\hat{P},P_{d}\in\mathbb{R}^{3}$\tabularnewline
		\addlinespace[0.1cm]
		\textcolor{red}{$V$}, $\hat{V}$, and $V_{d}$ & : & \textcolor{red}{True} (\textcolor{red}{unknown}), estimated, and desired linear velocity, $V,\hat{V},V_{d}\in\mathbb{R}^{3}$\tabularnewline
		\addlinespace[0.1cm]
		\textcolor{red}{$X$} and $\hat{X}$ & : & \textcolor{red}{True} (\textcolor{red}{unknown}) and estimated navigation, $X,\hat{X}\in\mathbb{SE}_{2}\left(3\right)$\tabularnewline
		\addlinespace[0.1cm]
		$\mathcal{T}\in\mathbb{R}^{3}$ & : & Rotational torque input\tabularnewline
		\addlinespace[0.1cm]
		$\Im\in\mathbb{R}$ & : & Thrust magnitude input\tabularnewline
		\addlinespace[0.1cm]
		$y_{i}^{\mathcal{B}}\in\mathbb{R}^{3}$ & : & The $i$th body-frame vector measurement\tabularnewline
		\addlinespace[0.1cm]
		${\rm v}_{i}^{\mathcal{I}}\in\mathbb{R}^{3}$ & : & The $i$th inertial-frame observation\tabularnewline
		\addlinespace[0.1cm]
		$z_{j}^{\mathcal{B}}\in\mathbb{R}^{3}$ & : & Measured $j$th landmark at body-frame\tabularnewline
		\addlinespace[0.1cm]
		$p_{j}^{\mathcal{I}}\in\mathbb{R}^{3}$ & : & The $j$th landmark observation at inertial-frame\tabularnewline
		\addlinespace[0.1cm]
		$b_{\star}^{\mathcal{B}}\in\mathbb{R}^{3}$ & : & The $\star$th bias component of $y_{\star}^{\mathcal{B}}$ measurement\tabularnewline
		\addlinespace[0.1cm]
		$n_{\star}^{\mathcal{B}}\in\mathbb{R}^{3}$ & : & The $\star$th noise component of $y_{\star}^{\mathcal{B}}$ measurement\tabularnewline
		\addlinespace[0.1cm]
		$R_{y}\in\mathbb{SO}\left(3\right)$ & : & Reconstructed attitude\tabularnewline
		\addlinespace[0.1cm]
		$\tilde{R}_{o}\in\mathbb{SO}\left(3\right)$ & : & Attitude estimation error\tabularnewline
		\addlinespace[0.1cm]
		$\tilde{\Omega}_{o}\in\mathbb{R}^{3}$ & : & Angular velocity estimation error\tabularnewline
		\addlinespace[0.1cm]
		$\tilde{P}_{o}\in\mathbb{R}^{3}$ & : & Position estimation error\tabularnewline
		\addlinespace[0.1cm]
		$\tilde{V}_{o}\in\mathbb{R}^{3}$ & : & Linear velocity estimation error\tabularnewline
		\addlinespace[0.1cm]
		$\tilde{R}_{c}\in\mathbb{SO}\left(3\right)$ & : & Attitude control error\tabularnewline
		\addlinespace[0.1cm]
		$\tilde{\Omega}_{c}\in\mathbb{R}^{3}$ & : & Angular velocity control error\tabularnewline
		\addlinespace[0.1cm]
		$\tilde{P}_{c}\in\mathbb{R}^{3}$ & : & Position control error\tabularnewline
		\addlinespace[0.1cm]
		$\tilde{V}_{c}\in\mathbb{R}^{3}$ & : & Linear velocity control error\tabularnewline
		\addlinespace[0.1cm]
		$F\in\mathbb{R}^{3}$ & : & Intermediary control input\tabularnewline
		\addlinespace[0.1cm]
		$m$ and $J$  & : & Mass and inertia of the UAV, $m\in\mathbb{R}$ and $J\in\mathbb{R}^{3\times3}$\tabularnewline
		\addlinespace[0.1cm]
		$Q$, $\hat{Q}$, and $Q_{d}$ & : & True (unknown), estimated, and desired unit-quaternion vector, $Q,\hat{Q},Q_{d}\in\mathbb{S}^{3}$\tabularnewline
		\addlinespace[0.1cm]
		$\mathcal{R}_{Q}\in\mathbb{SO}\left(3\right)$ & : & Attitude representation obtained using unit-quaternion vector\tabularnewline
		\bottomrule
	\end{tabular}
\end{table}

\section{Preliminaries\label{sec:Preliminaries-and-Math}}

The set of real numbers, an $n$-by-$m$ real dimensional space, and
non-negative real numbers are represented by $\mathbb{R}$, $\mathbb{R}^{n\times m}$,
and $\mathbb{R}_{+}$, respectively. $||x||=\sqrt{x^{\top}x}$ refers
to an Euclidean norm of a vector $x\in\mathbb{R}^{n}$, while $||M||_{F}=\sqrt{{\rm Tr}\{MM^{*}\}}$
denotes the Frobenius norm of a matrix $M\in\mathbb{R}^{n\times m}$
where $*$ is the conjugate transpose. The set of eigenvalues of a
given matrix $M\in\mathbb{R}^{n\times n}$ is represented by $\lambda(M)=\{\lambda_{1},\lambda_{2},\ldots,\lambda_{n}\}$
with $\overline{\lambda}_{M}=\overline{\lambda}(M)$ and $\underline{\lambda}_{M}=\underline{\lambda}(M)$
being the set's maximum and minimum values, respectively. $0_{n\times m}$
represents an $n$-by-$m$ dimensional zero matrix, while $\mathbf{I}_{n}$
is an $n$-by-$n$ identity matrix. Consider a vehicle navigating
in 3D space with
\begin{itemize}
\item $\left\{ \mathcal{B}\right\} \triangleq\{e_{\mathcal{B}1},e_{\mathcal{B}2},e_{\mathcal{B}3}\}$
signifying the fixed body-frame attached to a vehicle and
\item $\left\{ \mathcal{I}\right\} \triangleq\{e_{1},e_{2},e_{3}\}$
representing the fixed inertial-frame.
\end{itemize}
The standard basis vectors
of $\mathbb{R}^{3}$ are denoted by $e_{1}:=[1,0,0]^{\top}$, $e_{2}:=[0,1,0]^{\top}$,
and $e_{3}:=[0,0,1]^{\top}$. Note that for $x\in \mathbb{R}^{n}$ the $m$th derivative of $x$ is defined by $x^{(m)}=dx^{m}/dt^{m}$.

\subsection{Lie Group of $\mathbb{SO}\left(3\right)$ and Properties}

The vehicle's orientation in 3D space is termed attitude commonly
represented as a rotation matrix in $\{\mathcal{B}\}$ defined by
$R\in\mathbb{SO}\left(3\right)\subset\mathbb{R}^{3\times3}$. The
notation $\mathbb{SO}\left(3\right)$ refers to the 3-dimensional
\textit{Special Orthogonal Group} defined by 
\[
\mathbb{SO}\left(3\right)=\left\{ \left.R\in\mathbb{R}^{3\times3}\right|RR^{\top}=R^{\top}R=\mathbf{I}_{3}\text{, }{\rm det}\left(R\right)=+1\right\} 
\]
Define $T_{R}\mathbb{SO}\left(3\right)\in\mathbb{R}^{3\times3}$ as
a tangent space of $\mathbb{SO}\left(3\right)$ at point $R\in\mathbb{SO}\left(3\right)$.
The \textit{Lie-algebra} of $\mathbb{SO}\left(3\right)$ is termed
$\mathfrak{so}\left(3\right)$ and follows the map $\left[\cdot\right]_{\times}:\mathbb{R}^{3}\rightarrow\mathfrak{so}\left(3\right)$
\begin{align*}
	\mathfrak{so}\left(3\right) & =\left\{ \left.\left[y\right]_{\times}\in\mathbb{R}^{3\times3}\right|\left[y\right]_{\times}^{\top}=-\left[y\right]_{\times}\right\} \\
	\left[y\right]_{\times} & =\left[\begin{array}{ccc}
		0 & -y_{3} & y_{2}\\
		y_{3} & 0 & -y_{1}\\
		-y_{2} & y_{1} & 0
	\end{array}\right]\in\mathfrak{so}\left(3\right),\hspace{1em}y=\left[\begin{array}{c}
		y_{1}\\
		y_{2}\\
		y_{3}
	\end{array}\right]
\end{align*}
with $[y]_{\times}$ being a skew symmetric matrix such that $[y]_{\times}z=y\times z$
for all $y,z\in\mathbb{R}^{3}$. The inverse mapping of $\left[\cdot\right]_{\times}$
to $\mathbb{R}^{3}$ is defined by $\mathbf{vex}:\mathfrak{so}\left(3\right)\rightarrow\mathbb{R}^{3}$
such that
\begin{equation}
	\mathbf{vex}([y]_{\times})=y,\hspace{1em}\forall y\in\mathbb{R}^{3}\label{eq:VTOL_VEX}
\end{equation}
The anti-symmetric projection operator $\boldsymbol{\mathcal{P}}_{a}$
on the $\mathfrak{so}\left(3\right)$ is defined by
\begin{align}
	\boldsymbol{\mathcal{P}}_{a}(A) & =\frac{1}{2}(A-A^{\top})\in\mathfrak{so}\left(3\right),\hspace{1em}\forall A\in\mathbb{R}^{3\times3}\label{eq:VTOL_Pa}\\
	\mathbf{vex}(\boldsymbol{\mathcal{P}}_{a}(A)) & =\frac{1}{2}[A_{32}-A_{23},A_{13}-A_{31},A_{21}-A_{12}]^{\top}\label{eq:VTOL_VEX_a}
\end{align}
where $A:=[A_{ij}]_{i,j=1,2,3}$. Define $||R||_{{\rm I}}$ as the
normalized Euclidean distance of $R\in\mathbb{SO}\left(3\right)$
such that
\begin{equation}
	||R||_{{\rm I}}=\frac{1}{4}{\rm Tr}\{\mathbf{I}_{3}-R\}\in\left[0,1\right]\label{eq:VTOL_Ecul_Dist}
\end{equation}
It is worth noting that $-1\leq{\rm Tr}\{R\}\leq3$ and $||R||_{{\rm I}}=\frac{1}{8}||\mathbf{I}_{3}-R||_{F}^{2}$
\cite{hashim2019AtiitudeSurvey}. Visit \cite{hashim2019SO3Wiley,hashim2019AtiitudeSurvey} for more information.

\subsection{$\mathbb{SE}\left(3\right)$, $\mathbb{SE}_{2}\left(3\right)$, and
	Tangent Space}

Let the vehicle's orientation, position, and linear velocity be denoted
as $R\in\mathbb{SO}\left(3\right)$, $P\in\mathbb{R}^{3}$, and $V\in\mathbb{R}^{3}$,
respectively. The \textit{Special Euclidean Group} is defined by $\mathbb{SE}\left(3\right):=\mathbb{SO}\left(3\right)\times\mathbb{R}^{3}\subset\mathbb{R}^{4\times4}$
where $T\in\mathbb{SE}\left(3\right)$ is a homogeneous transformation
matrix defined as follows:
\begin{equation}
	T=\left[\begin{array}{cc}
		R^{\top} & P\\
		0_{1\times3} & 1
	\end{array}\right],\hspace{1em}T^{-1}=\left[\begin{array}{cc}
		R & -RP\\
		0_{1\times3} & 1
	\end{array}\right]\label{eq:VTOL_T}
\end{equation}
visit \cite{hashim2020SE3Stochastic} for more information. $\mathbb{SE}_{2}\left(3\right)$
is the extended form of the \textit{Special Euclidean Group} introduced by \cite{barrau2016invariant} where
$\mathbb{SE}_{2}\left(3\right)=\mathbb{SO}\left(3\right)\times\mathbb{R}^{3}\times\mathbb{R}^{3}\subset\mathbb{R}^{5\times5}$
such that
\begin{equation}
	\mathbb{SE}_{2}\left(3\right)=\{\left.X\in\mathbb{R}^{5\times5}\right|R\in\mathbb{SO}\left(3\right),P,V\in\mathbb{R}^{3}\}\label{eq:VTOL_SE2_3}
\end{equation}
with $R$, $P$, and $V$ being vehicle's attitude, position and linear
velocity, respectively, and 
\begin{equation}
	X=\mathcal{N}(R^{\top},P,V)=\left[\begin{array}{ccc}
		R^{\top} & P & V\\
		0_{1\times3} & 1 & 0\\
		0_{1\times3} & 0 & 1
	\end{array}\right]\in\mathbb{SE}_{2}\left(3\right)\label{eq:VTOL_X}
\end{equation}
being its homogeneous navigation matrix (for more details see \cite{hashim2021Navigation,hashim2021gps}).
Note that 
\[
X^{-1}=\left[\begin{array}{ccc}
	R & -RP & -RV\\
	0_{1\times3} & 1 & 0\\
	0_{1\times3} & 0 & 1
\end{array}\right]\in\mathbb{SE}_{2}\left(3\right)
\]
The tangent space of $\mathbb{SE}_{2}\left(3\right)$ at point $X\in\mathbb{SE}_{2}\left(3\right)$
is defined by $T_{X}\mathbb{SE}_{2}\left(3\right)\in\mathbb{R}^{5\times5}$.
Define the submanifold $\mathcal{U}_{\mathcal{M}}=\mathfrak{so}\left(3\right)\times\mathbb{R}^{3}\times\mathbb{R}^{3}\times\mathbb{R}\subset\mathbb{R}^{5\times5}$
as
\begin{align}
	\mathcal{U}_{\mathcal{M}} & =\left\{ \left.u(\text{[\ensuremath{\Omega\text{\ensuremath{]_{\times}}}}},V,a,\kappa)\right|[\Omega\text{\ensuremath{]_{\times}}}\in\mathfrak{so}\left(3\right),V,a\in\mathbb{R}^{3},\kappa\in\mathbb{R}\right\} \nonumber \\
	& U=u([\Omega\text{\ensuremath{]_{\times}}},V,a,\kappa)=\left[\begin{array}{ccc}
		[\Omega\text{\ensuremath{]_{\times}}} & V & a\\
		0_{1\times3} & 0 & 0\\
		0_{1\times3} & \kappa & 0
	\end{array}\right]\in\mathcal{U}_{\mathcal{M}}\label{eq:VTOL_Mu}
\end{align}

\subsection{Unit-quaternion\label{subsec:Unit-quaternion}}

\noindent Let us define a set of 3-unit-sphere 
\[
\mathbb{S}^{3}=\{\left.Q=[q_{0},q^{\top}]^{\top}\in\mathbb{R}^{4}\right|||Q||=\sqrt{q_{0}^{2}+q^{\top}q}=1\}
\]
where $q_{0}\in\mathbb{R}$ and $q\in\mathbb{R}^{3}$. Consider the
inverse of $Q\in\mathbb{S}^{3}$ to be $Q^{-1}=[\begin{array}{cc}
	q_{0} & -q^{\top}\end{array}]^{\top}\in\mathbb{S}^{3}$. Let $\odot$ be a quaternion product. For $Q_{1}=[\begin{array}{cc}
	q_{01} & q_{1}^{\top}\end{array}]^{\top}\in\mathbb{S}^{3}$ and $Q_{2}=[\begin{array}{cc}
	q_{02} & q_{2}^{\top}\end{array}]^{\top}\in\mathbb{S}^{3}$, one has
\[
Q_{1}\odot Q_{2}=\left[\begin{array}{c}
	q_{01}q_{02}-q_{1}^{\top}q_{2}\\
	q_{01}q_{2}+q_{02}q_{1}+[q_{1}]_{\times}q_{2}
\end{array}\right]
\]
The mapping from $\mathbb{S}^{3}$ to $\mathbb{SO}\left(3\right)$
is given by
\begin{align}
	\mathcal{R}_{Q} & =(q_{0}^{2}-||q||^{2})\mathbf{I}_{3}+2qq^{\top}-2q_{0}\left[q\right]_{\times}\in\mathbb{SO}\left(3\right)\label{eq:NAV_Append_SO3}
\end{align}
The following two identities will be utilized in the subsequent derivations:
\begin{align}
	[Ry]_{\times}= & R[y]_{\times}R^{\top},\hspace{1em}y\in{\rm \mathbb{R}}^{3},R\in\mathbb{SO}\left(3\right)\label{eq:VTOL_R_Identity1}\\
	{\rm Tr}\{A[y]_{\times}\}= & {\rm Tr}\{\boldsymbol{\mathcal{P}}_{a}(A)[y]_{\times}\},\hspace{1em}y\in{\rm \mathbb{R}}^{3},A\in\mathbb{R}^{3\times3}\nonumber \\
	= & -2\mathbf{vex}(\boldsymbol{\mathcal{P}}_{a}(A))^{\top}y\label{eq:VTOL_R_Identity2}
\end{align}

\section{Problem Formulation and Measurements\label{sec:SE3_Problem-Formulation}}

Consider a UAV navigating in 3D space. Let its true attitude, angular
velocity, position, and linear velocity be unknown and described by
$R\in\mathbb{SO}\left(3\right)$, $\Omega\in\mathbb{R}^{3}$, $P\in\mathbb{R}^{3}$,
and $V\in\mathbb{R}^{3}$, respectively. Note that while $R,\Omega\in\{\mathcal{B}\}$
are defined with respect to the body-frame, $P,V\in\{\mathcal{I}\}$
are represented with respect to the inertial-frame. The dynamical
equations of a VTOL-UAV are given by
\begin{align}
	\text{Rotation} & \begin{cases}
		\dot{R} & =-\left[\Omega\right]_{\times}R\\
		J\dot{\Omega} & =\left[J\Omega\right]_{\times}\Omega+\mathcal{T}
	\end{cases},\hspace{1em}R,\Omega\in\{\mathcal{B}\}\label{eq:VTOL_Rotation}\\
	\text{Translation} & \begin{cases}
		\dot{P} & =V\\
		\dot{V} & =ge_{3}-\frac{\Im}{m}R^{\top}e_{3}
	\end{cases},\hspace{1em}P,V\in\{\mathcal{I}\}\label{eq:VTOL_Translation}
\end{align}
with $\mathcal{T}\in\mathbb{R}^{3}$ being the external torque input,
$\Im\in\mathbb{R}$ being the thrust magnitude input in the direction
of $e_{\mathcal{B}3}$ (see Fig. \ref{fig:VTOL}), and $J\in\mathbb{R}^{3\times3}$
being a constant symmetric positive definite inertia matrix. $e_{3}=[0,0,1]^{\top}$,
$m$, and $g$ denote standard basis vector, vehicle's constant mass,
and gravitational acceleration, respectively. Note that $J,\mathcal{T}\in\{\mathcal{B}\}$.
The set in \eqref{eq:VTOL_Rotation} describes the true VTOL-UAV rotational
dynamics, while the set in \eqref{eq:VTOL_Translation} describes
the true VTOL-UAV translational dynamics. It is apparent that the
nonlinear attitude dynamics in \eqref{eq:VTOL_Rotation} follow the
map $\mathbb{SO}\left(3\right)\times\mathfrak{so}\left(3\right)\rightarrow T_{R}\mathbb{SO}\left(3\right)$.
The nonlinear dynamics in \eqref{eq:VTOL_Rotation} and \eqref{eq:VTOL_Translation}
can be rewritten compactly as follows: 
\begin{equation}
	\begin{cases}
		\dot{X} & =XU-\mathcal{G}X\\
		J\dot{\Omega} & =\left[J\Omega\right]_{\times}\Omega+\mathcal{T}
	\end{cases}\label{eq:VTOL_Compact}
\end{equation}
where the navigation matrix $X\in\mathbb{SE}_{2}\left(3\right)$ is
as defined in \eqref{eq:VTOL_X}, $U=\underbrace{\left[\begin{array}{ccc}
		[\Omega\text{\ensuremath{]_{\times}}} & 0_{3\times1} & -\frac{\Im}{m}e_{3}\\
		0_{1\times3} & 0 & 0\\
		0_{1\times3} & 1 & 0
	\end{array}\right]}_{u([\Omega\text{\ensuremath{]_{\times}}},0_{3\times1},-\frac{\Im}{m}e_{3},1)}\in\mathcal{U}_{\mathcal{M}},$ and $\mathcal{G}=\underbrace{\left[\begin{array}{ccc}
		0_{3\times3} & 0_{3\times1} & -ge_{3}\\
		0_{1\times3} & 0 & 0\\
		0_{1\times3} & 1 & 0
	\end{array}\right]}_{u(0_{3\times3},0_{3\times1},-ge_{3},1)}\in\mathcal{U}_{\mathcal{M}}$, see \eqref{eq:VTOL_Mu}. The nonlinear dynamics in \eqref{eq:VTOL_Compact}
follow the map $\mathbb{SE}_{2}\left(3\right)\times\mathcal{U}_{\mathcal{M}}\rightarrow T_{X}\mathbb{SE}_{2}\left(3\right)$
with $\dot{X}\in T_{X}\mathbb{SE}_{2}\left(3\right)$. Fig. \ref{fig:VTOL}
schematically depicts the VTOL-UAV estimation and tracking control
problem.
\begin{figure}
	\centering{}\includegraphics[scale=0.49]{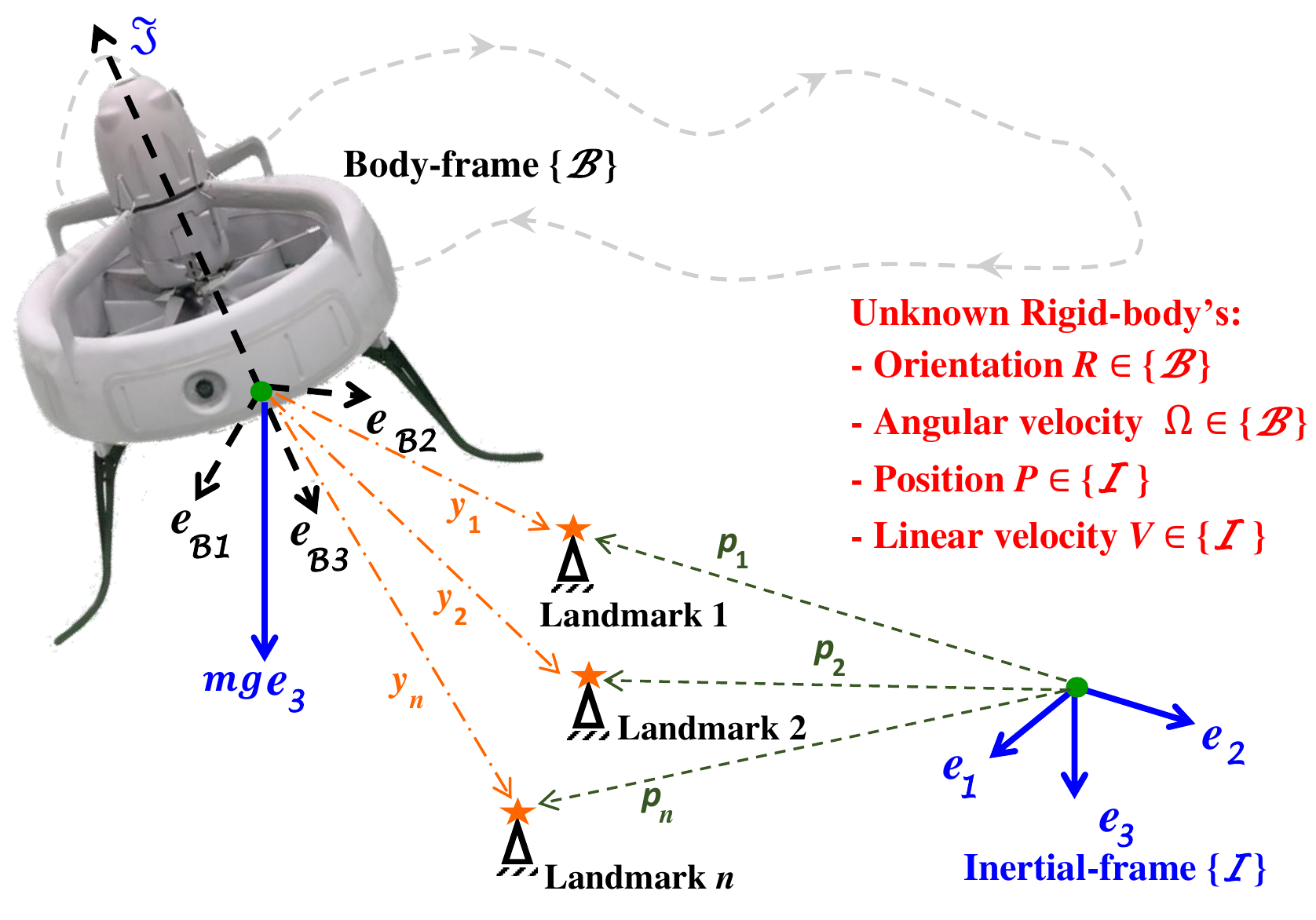}\caption{VTOL-UAV estimation and tracking control problem.}
	\label{fig:VTOL}
\end{figure}

\subsection{Inertial Measurements}

Considering the fact that the true VTOL-UAV motion parameters $R$,
$\Omega$, $P$, and $V$ are unknown, the estimation process requires
sensor measurements. For simplicity, let the superscripts $\mathcal{I}$
and $\mathcal{B}$ indicate association with inertial-frame and body-frame,
respectively. Given a set of observations in $\{\mathcal{I}\}$ and
the corresponding $\{\mathcal{B}\}$ measurements, the orientation
of a vehicle can be obtained as follows \cite{hashim2019SO3Wiley,zlotnik2016exponential}:
\begin{align}
	y_{i}^{\mathcal{B}} & =R{\rm v}_{i}^{\mathcal{I}}+b_{i}^{\mathcal{B}}+n_{i}^{\mathcal{B}}\in\mathbb{R}^{3}\label{eq:VTOL_VR}
\end{align}
where $y_{i}^{\mathcal{B}}$ refers to vector measurements, ${\rm v}_{i}^{\mathcal{I}}$
refers to a known observation, $b_{i}^{\mathcal{B}}$ denotes unknown
constant bias, and $n_{i}^{\mathcal{B}}$ describes unknown random
noise associated with the $i$th measurement for all $i=1,2,\ldots,N_{1}$.
Note that \eqref{eq:VTOL_VR} represents a typical low-cost IMU module
(e.g., magnetometer and accelerometer). Moreover, attitude and position
can be obtained via a vision unit (monocular or stereo camera) using
a group of known landmarks in $\{\mathcal{I}\}$ and their measurements
in $\{\mathcal{B}\}$ where the $j$th measurement is defined by \cite{hashim2020SE3Stochastic}
\begin{align}
	z_{j}^{\mathcal{B}} & =R(p_{j}^{\mathcal{I}}-P)+b_{j}^{\mathcal{B}}+n_{j}^{\mathcal{B}}\in\mathbb{R}^{3}\label{eq:VTOL_VRP}
\end{align}
with $p_{j}^{\mathcal{I}}$ denoting a known landmark, $b_{j}^{\mathcal{B}}$
denoting unknown bias (constant), and $n_{j}^{\mathcal{B}}$ denoting
unknown random noise for all $j=1,2,\ldots,N_{2}$. Define $s_{j}$
as the $j$th measurement sensor confidence level, and let $s_{c}=\sum_{j=1}^{N_{2}}s_{j}$.
Define the landmark weighted geometric center (observations and measurements)
as
\begin{equation}
	p_{c}=\frac{1}{s_{c}}\sum_{j=1}^{N_2}s_{j}p_{j}^{\mathcal{I}},\hspace{1em}z_{c}=\frac{1}{s_{c}}\sum_{j=1}^{N_{2}}s_{j}z_{j}^{\mathcal{B}}\label{eq:VTOL_R_Weighted}
\end{equation}
The measurement in \eqref{eq:VTOL_VR} can be reformulated in terms
of the homogeneous transformation matrix $T\in\mathbb{SE}\left(3\right)$
in \eqref{eq:VTOL_T} as $\overline{y}_{i}^{\mathcal{B}}=T^{-1}\overline{{\rm v}}_{i}^{\mathcal{I}}+\overline{b}_{i}^{\mathcal{B}}+\overline{n}_{i}^{\mathcal{B}}\in\mathbb{R}^{4}$
where $\overline{y}_{i}^{\mathcal{B}}=[(y_{i}^{\mathcal{B}})^{\top},0]^{\top}$,
$\overline{{\rm v}}_{i}^{\mathcal{I}}=[({\rm v}_{i}^{\mathcal{I}})^{\top},0]^{\top}$,
$\overline{b}_{i}^{\mathcal{B}}=[(b_{i}^{\mathcal{B}})^{\top},0]^{\top}$,
and $\overline{n}_{i}^{\mathcal{B}}=[(n_{i}^{\mathcal{B}})^{\top},0]^{\top}$.
Likewise, the measurement in \eqref{eq:VTOL_VRP} can be described
with respect to $T\in\mathbb{SE}\left(3\right)$ as $\overline{z}_{j}^{\mathcal{B}}=T^{-1}\overline{p}_{j}^{\mathcal{I}}+\overline{b}_{j}^{\mathcal{B}}+\overline{n}_{j}^{\mathcal{B}}\in\mathbb{R}^{4}$
where $\overline{z}_{j}^{\mathcal{B}}=[(z_{j}^{\mathcal{B}})^{\top},1]^{\top}$,
$\overline{p}_{j}^{\mathcal{I}}=[(p_{j}^{\mathcal{I}})^{\top},1]^{\top}$,
$\overline{b}_{j}^{\mathcal{B}}=[(b_{j}^{\mathcal{B}})^{\top},0]^{\top}$,
and $\overline{n}_{j}^{\mathcal{B}}=[(n_{j}^{\mathcal{B}})^{\top},0]^{\top}$.

\begin{assum}\label{Assum:VTOL_1Landmark}(Pose observability) The
	pose of a vehicle $T\in\mathbb{SE}\left(3\right)$ can be obtained
	if one of the following three conditions is met:
	\begin{enumerate}
		\item[A1.] observations in $\{\mathcal{I}\}$ and the associated $\{\mathcal{B}\}$
		measurements of a minimum of one landmark as in \eqref{eq:VTOL_VRP}
		and two different inertial vectors as in \eqref{eq:VTOL_VR} are non-collinear.
		\item[A2.] observations in $\{\mathcal{I}\}$ and the associated $\{\mathcal{B}\}$
		measurements of a minimum of two different landmarks as in \eqref{eq:VTOL_VRP}
		and one inertial vector as in \eqref{eq:VTOL_VR} are non-collinear.
		\item[A3.] observations in $\{\mathcal{I}\}$ and the associated $\{\mathcal{B}\}$
		measurements of a minimum of three different landmarks as in \eqref{eq:VTOL_VRP}
		are non-collinear.
	\end{enumerate}
\end{assum}

It is worth mentioning that Assumption \ref{Assum:VTOL_1Landmark}
is standard for pose filtering \cite{hashim2020SE3Stochastic,moeini2016global}.

\begin{assum}\label{Assum:P_Om_desired}Let $P_{d}$ denote the desired
	position of a VTOL-UAV with $\dot{P}_{d}=V_{d}$, $\ddot{P}_{d}$,
	$P_{d}^{(3)}$, and $P_{d}^{(4)}$ being its first, second, third,
	and fourth derivatives, respectively. Also, let $\Omega_{d}$ and
	$\dot{\Omega}_{d}$ be the desired angular velocity and its rate of
	change, respectively. $P_{d}$, $V_{d}$, $\ddot{P}_{d}$, $P_{d}^{(3)}$,
	$P_{d}^{(4)}$, $\Omega_{d}$, and $\dot{\Omega}_{d}$ are assumed
	to be uniformly upper bounded in time.\end{assum}
\begin{lem}
	\label{Lemm:vex_RI}Let $R\in\mathbb{SO}\left(3\right)$, and consider
	the definitions in \eqref{eq:VTOL_VEX_a} and \eqref{eq:VTOL_Ecul_Dist}.
	Consequently, the following holds:
	\begin{equation}
		||\mathbf{vex}(\boldsymbol{\mathcal{P}}_{a}(R))||^{2}=4(1-||R||_{{\rm I}})||R||_{{\rm I}}\label{eq:VTOL_lemm_vexRI}
	\end{equation}
	\begin{proof}See the Appendix in \cite{hashim2019SO3Wiley}.\end{proof}
\end{lem}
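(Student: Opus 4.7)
\textbf{Proof plan for Lemma \ref{Lemm:vex_RI}.} The plan is to rewrite each side as a polynomial in $\mathrm{Tr}\{R\}$ and show the two polynomials agree. The core identity linking $\mathbf{vex}$ to the Frobenius norm is that for any $y\in\mathbb{R}^{3}$, $\|[y]_{\times}\|_{F}^{2}=2\|y\|^{2}$, so, since $\boldsymbol{\mathcal{P}}_{a}(R)\in\mathfrak{so}(3)$, one has $\|\mathbf{vex}(\boldsymbol{\mathcal{P}}_{a}(R))\|^{2}=\tfrac{1}{2}\|\boldsymbol{\mathcal{P}}_{a}(R)\|_{F}^{2}$. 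This reduces the left-hand side to a single trace, which is the step I would carry out first.

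Next I would expand $\|\boldsymbol{\mathcal{P}}_{a}(R)\|_{F}^{2}=\tfrac{1}{4}\mathrm{Tr}\{(R-R^{\top})(R-R^{\top})^{\top}\}=-\tfrac{1}{4}\mathrm{Tr}\{(R-R^{\top})^{2}\}$, using skew-symmetry of $R-R^{\top}$. Expanding the square and invoking $R R^{\top}=R^{\top}R=\mathbf{I}_{3}$ collapses this to $-\tfrac{1}{4}\bigl(2\,\mathrm{Tr}\{R^{2}\}-6\bigr)=\tfrac{3}{2}-\tfrac{1}{2}\mathrm{Tr}\{R^{2}\}$. So the left-hand side becomes $\tfrac{3}{4}-\tfrac{1}{4}\mathrm{Tr}\{R^{2}\}$.

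The decisive step, and the one I expect to be the main obstacle for a reader who has not seen it before, is the identity $\mathrm{Tr}\{R^{2}\}=\mathrm{Tr}\{R\}^{2}-2\,\mathrm{Tr}\{R\}$ valid for every $R\in\mathbb{SO}(3)$. The cleanest justification uses the spectrum of $R$: the eigenvalues are $\{1,e^{i\theta},e^{-i\theta}\}$, hence $\mathrm{Tr}\{R\}=1+2\cos\theta$ and $\mathrm{Tr}\{R^{2}\}=1+2\cos(2\theta)=4\cos^{2}\theta-1$, from which the quadratic identity follows by eliminating $\cos\theta$. (Equivalently, one may invoke Rodrigues's formula and compute trace directly, or use the Cayley–Hamilton relation for $R$ together with $\det R=1$.)

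Substituting this identity gives
\[
\|\mathbf{vex}(\boldsymbol{\mathcal{P}}_{a}(R))\|^{2}=\tfrac{3}{4}-\tfrac{1}{4}\bigl(\mathrm{Tr}\{R\}^{2}-2\,\mathrm{Tr}\{R\}\bigr)=\tfrac{1}{4}\bigl(3+2\,\mathrm{Tr}\{R\}-\mathrm{Tr}\{R\}^{2}\bigr).
\]
For the right-hand side, I would substitute the definition \eqref{eq:VTOL_Ecul_Dist}, which yields $\|R\|_{\mathrm{I}}=\tfrac{1}{4}(3-\mathrm{Tr}\{R\})$ and $1-\|R\|_{\mathrm{I}}=\tfrac{1}{4}(1+\mathrm{Tr}\{R\})$, so $4(1-\|R\|_{\mathrm{I}})\|R\|_{\mathrm{I}}=\tfrac{1}{4}(1+\mathrm{Tr}\{R\})(3-\mathrm{Tr}\{R\})=\tfrac{1}{4}(3+2\,\mathrm{Tr}\{R\}-\mathrm{Tr}\{R\}^{2}).$ The two expressions coincide, completing the proof. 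As a sanity check I would verify the endpoints $\theta=0$ (both sides $0$) and $\theta=\pi/2$ (both sides $1$), and note that the formula also confirms $\|R\|_{\mathrm{I}}\in[0,1]$ already asserted after \eqref{eq:VTOL_Ecul_Dist}.
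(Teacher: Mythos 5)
Your proof is correct; every step checks out. The reduction $\|\mathbf{vex}(\boldsymbol{\mathcal{P}}_{a}(R))\|^{2}=\tfrac{1}{2}\|\boldsymbol{\mathcal{P}}_{a}(R)\|_{F}^{2}$ is valid because $\boldsymbol{\mathcal{P}}_{a}(R)=[\mathbf{vex}(\boldsymbol{\mathcal{P}}_{a}(R))]_{\times}$ and $\|[y]_{\times}\|_{F}^{2}=2\|y\|^{2}$; the expansion to $\tfrac{3}{4}-\tfrac{1}{4}\mathrm{Tr}\{R^{2}\}$ is right; the identity $\mathrm{Tr}\{R^{2}\}=\mathrm{Tr}\{R\}^{2}-2\,\mathrm{Tr}\{R\}$ holds on $\mathbb{SO}(3)$ by the spectral argument you give (or by Cayley--Hamilton with $\det R=1$ and $e_{2}(R)=\mathrm{Tr}\{R^{-1}\}=\mathrm{Tr}\{R\}$); and the right-hand side indeed equals $\tfrac{1}{4}(1+\mathrm{Tr}\{R\})(3-\mathrm{Tr}\{R\})$. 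Note, however, that the paper does not prove this lemma at all --- it simply defers to the appendix of the cited reference, where the standard derivation goes through the angle--axis (Rodrigues) parameterization: writing $R=\exp(\theta[u]_{\times})$ one gets $\mathbf{vex}(\boldsymbol{\mathcal{P}}_{a}(R))=\sin(\theta)\,u$ and $\|R\|_{\mathrm{I}}=\sin^{2}(\theta/2)$, and the identity is the double-angle formula $\sin^{2}\theta=4\sin^{2}(\theta/2)(1-\sin^{2}(\theta/2))$. Your route is a coordinate-free trace computation that avoids parameterizing $R$ entirely except for the single spectral fact about $\mathrm{Tr}\{R^{2}\}$; it is marginally longer but self-contained and makes explicit that both sides are the same quadratic polynomial in $\mathrm{Tr}\{R\}$, which is arguably more informative than the trigonometric shortcut.
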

\begin{defn}
	\label{def:Unstable-set} Define the following non-attractive and
	forward invariant unstable set $\mathcal{U}_{s}\subseteq\mathbb{SO}\left(3\right)$:
	\begin{equation}
		\mathcal{U}_{s}=\{\left.R(0)\in\mathbb{SO}\left(3\right)\right|{\rm Tr}\{R(0)\}=-1\}\label{eq:SO3_PPF_STCH_SET}
	\end{equation}
	where $R(0)\in\mathcal{U}_{s}$ in one of the following three cases:
	$R(0)={\rm diag}(-1,-1,1)$, $R(0)={\rm diag}(-1,1,-1)$, or $R(0)={\rm diag}(1,-1,-1)$.
\end{defn}
\begin{figure*}
	\centering{}\includegraphics[scale=0.4]{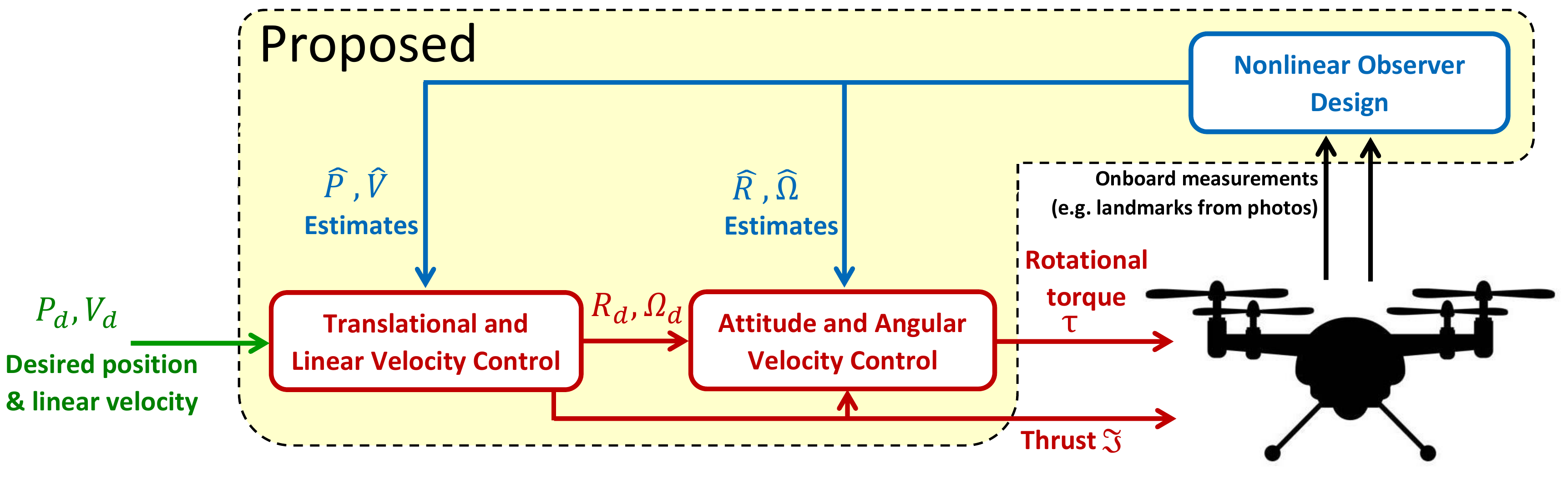}\caption{Graphical summery of the proposed observer-based control solution
		for a VTOL-UAV.}
	\label{fig:VTOL-1}
\end{figure*}

The goal of this work is two-fold and includes proposing a nonlinear
observer and a controller which are strongly coupled and designed
to operate as a module. Our first objective (Section \ref{sec:VTOL_Observer})
is to design a nonlinear observer able to estimate attitude ($\hat{R})$,
angular velocity ($\hat{\Omega})$, position ($\hat{P})$, and linear
velocity ($\hat{V})$ of a UAVs in six degrees of freedom (6 DoF)
using onboard sensor measurements. Herein, measurements refer to feature
information extracted from photos taken by a monocular or stereo camera.
The proposed solution does not require an IMU or a GPS, but can be integrated
with IMU and GPS measurements if needed. The second objective (Section \ref{sec:VTOL_Controller})
is to design a controller that uses components estimated by the observer
to generate rotational torque ($\mathcal{T}$) and thrust ($\Im$)
necessary to control the VTOL-UAV with respect to a desired position
and linear velocity trajectories. Fig. \ref{fig:VTOL-1} graphically
illustrates the research objective of this work.

\section{Nonlinear Observer Design on Lie Group \label{sec:VTOL_Observer}}

Define $P_{y}\in\mathbb{R}^{3}$ as a reconstructed position and $R_{y}\in\mathbb{SO}\left(3\right)$
as a reconstructed attitude of a VTOL-UAV (for more details visit
\cite{hashim2020SE3Stochastic,hashim2019SO3Wiley}). The concept of
reconstruction is detailed in Section \ref{sec:SE3_Simulations}%
. For the sake of stability analysis of the observer design, it is
considered that the reconstructed components $P_{y}$ and $R_{y}$
are close to the true components $P$ and $R$. In the implementation,
on the contrary, the observer is tested against a high level of uncertainties
corrupting $P_{y}$ and $R_{y}$. Define $\hat{R}\in\mathbb{SO}\left(3\right)$,
$\hat{\Omega}\in\mathbb{R}^{3}$, $\hat{P}\in\mathbb{R}^{3}$, and
$\hat{V}\in\mathbb{R}^{3}$ as the estimates of the true attitude
$R\in\mathbb{SO}\left(3\right)$, angular velocity $\Omega\in\mathbb{R}^{3}$,
position $P\in\mathbb{R}^{3}$, and linear velocity $V\in\mathbb{R}^{3}$,
respectively. Define the errors between the estimated and the true
values of attitude, angular velocity, position, and linear velocity
as
\begin{align}
	\tilde{R}_{o}= & R\hat{R}^{\top}\label{eq:VTOL_Rerr}\\
	\tilde{\Omega}_{o}= & \Omega-\tilde{R}_{o}\hat{\Omega}\label{eq:VTOL_Omerr}\\
	\tilde{P}_{o}= & P-\hat{P}\label{eq:VTOL_Perr}\\
	\tilde{V}_{o}= & V-\hat{V}\label{eq:VTOL_Verr}
\end{align}
where $\tilde{R}_{o}\in\mathbb{SO}\left(3\right)$, $\tilde{\Omega}_{o}\in\mathbb{R}^{3}$,
$\tilde{P}_{o}\in\mathbb{R}^{3}$, and $\tilde{V}_{o}\in\mathbb{R}^{3}$.
Consider the nonlinear dynamics $\dot{X}=XU-\mathcal{G}X\in T_{X}\mathbb{SE}_{2}\left(3\right)$
in \eqref{eq:VTOL_Compact} where $X\in\mathbb{SE}_{2}\left(3\right)$
and $U,\mathcal{G}\in\mathcal{U}_{\mathcal{M}}$ such that $\mathbb{SE}_{2}\left(3\right)\times\mathcal{U}_{\mathcal{M}}\rightarrow T_{X}\mathbb{SE}_{2}\left(3\right)$.
The objective of this Section is to propose a nonlinear observer on
the Lie group of $\mathbb{SE}_{2}\left(3\right)$ that mimics the
nonlinear dynamics in \eqref{eq:VTOL_Compact} with $\hat{X}\in\mathbb{SE}_{2}\left(3\right)$
being the estimate of $X$ such that $\dot{\hat{X}}\in T_{\hat{X}}\mathbb{SE}_{2}\left(3\right)$.
The proposed observer aims to drive 
\begin{align*}
	\hat{R} & \rightarrow R\\
	\hat{\Omega} & \rightarrow\Omega\\
	\hat{P} & \rightarrow P\\
	\hat{V} & \rightarrow V
\end{align*}
with $\lim_{t\rightarrow\infty}\tilde{R}_{o}=\mathbf{I}_{3}$ and
$\lim_{t\rightarrow\infty}\tilde{\Omega}_{o}=\lim_{t\rightarrow\infty}\tilde{P}_{o}=\lim_{t\rightarrow\infty}\tilde{V}_{o}=0_{3\times1}$.
Let us propose a nonlinear observer for VTOL-UAV on Lie group in a
compact form:
\begin{equation}
	\begin{cases}
		\dot{\hat{X}} & =\hat{X}\hat{U}-W\hat{X}\\
		\hat{J}\dot{\hat{\Omega}} & =[\hat{J}\hat{\Omega}]_{\times}\hat{\Omega}+\hat{\mathcal{T}}-\hat{J}[\hat{\Omega}]_{\times}\hat{R}w_{\Omega}+w_{o}
	\end{cases}\label{eq:VTOL_ObsvCompact}
\end{equation}
with the following set of correction factors:
\begin{equation}
	\begin{cases}
		w_{o} & =-\gamma_{o}\tilde{R}_{o}^{\top}\mathbf{vex}(\boldsymbol{\mathcal{P}}_{a}(\tilde{R}_{o})),\hspace{1em}\tilde{R}_{o}=R_{y}\hat{R}^{\top}\\
		w_{\Omega} & =k_{o1}R_{y}^{\top}\mathbf{vex}(\boldsymbol{\mathcal{P}}_{a}(\tilde{R}_{o}))\\
		w_{V} & =-\left[w_{\Omega}\right]_{\times}\hat{P}-k_{o2}\tilde{P}_{o},\hspace{1em}\tilde{P}_{o}=P_{y}-\hat{P}\\
		w_{a} & =-\frac{\Im}{m}\hat{R}^{\top}(\mathbf{I}_{3}-\tilde{R}_{o}^{\top})e_{3}-ge_{3}-\left[w_{\Omega}\right]_{\times}\hat{V}-k_{o3}\tilde{P}_{o}
	\end{cases}\label{eq:VTOL_Wcorr}
\end{equation}
such that 
\[
\hat{X}=\mathcal{N}(\hat{R}^{\top},\hat{P},\hat{V})=\left[\begin{array}{ccc}
	\hat{R}^{\top} & \hat{P} & \hat{V}\\
	0_{1\times3} & 1 & 0\\
	0_{1\times3} & 0 & 1
\end{array}\right]\in\mathbb{SE}_{2}\left(3\right)
\]
in accordance with the map in \eqref{eq:VTOL_X}, 
\[
\hat{U}=\underbrace{\left[\begin{array}{ccc}
		[\hat{\Omega}\text{\ensuremath{]_{\times}}} & 0_{3\times1} & -\frac{\Im}{m}e_{3}\\
		0_{1\times3} & 0 & 0\\
		0_{1\times3} & 1 & 0
	\end{array}\right]}_{u([\hat{\Omega}\text{\ensuremath{]_{\times}}},0_{3\times1},-\frac{\Im}{m}e_{3},1)}\in\mathcal{U}_{m}
\]
and 
\[
W=u([w_{\Omega}]_{\times},w_{V},w_{a},1)=\left[\begin{array}{ccc}
	[w_{\Omega}]_{\times} & w_{V} & w_{a}\\
	0_{1\times3} & 0 & 0\\
	0_{1\times3} & 1 & 0
\end{array}\right]\in\mathcal{U}_{m}
\]
as per the map in \eqref{eq:VTOL_Mu}, $J$, $m$, and $g$ denote
vehicle's inertia matrix, mass, and gravitational acceleration, respectively,
$\hat{\mathcal{T}}=\tilde{R}_{o}^{\top}\mathcal{T}$ and $\hat{J}=\tilde{R}_{o}^{\top}J\tilde{R}_{o}$
denote the rotational torque input and the inertia matrix, respectively,
and $\gamma_{o}$, $k_{o1}$, $k_{o2}$, and $k_{o3}$ stand for strictly
positive constants. It becomes apparent that $\dot{\hat{X}}\in T_{\hat{X}}\mathbb{SE}_{2}\left(3\right)$.
The detailed representation of the novel nonlinear observer in \eqref{eq:VTOL_ObsvCompact}
is as follows:
\begin{align}
	\dot{\hat{R}} & =\hat{R}[w_{\Omega}]_{\times}-[\hat{\Omega}]_{\times}\hat{R}\label{eq:VTOL_Restdot}\\
	\hat{J}\dot{\hat{\Omega}} & =[\hat{J}\hat{\Omega}]_{\times}\hat{\Omega}+\hat{\mathcal{T}}-\hat{J}[\hat{\Omega}]_{\times}\hat{R}w_{\Omega}+w_{o}\label{eq:VTOL_Omestdot}\\
	\dot{\hat{P}} & =\hat{V}-[w_{\Omega}]_{\times}\hat{P}-w_{V}\label{eq:VTOL_Pestdot}\\
	\dot{\hat{V}} & =-\frac{\Im}{m}\hat{R}^{\top}e_{3}-[w_{\Omega}]_{\times}\hat{V}-w_{a}\label{eq:VTOL_Vestdot}
\end{align}

\begin{thm}
	\label{thm:Theorem1}Consider VTOL-UAV dynamics in \eqref{eq:VTOL_Rotation}
	and \eqref{eq:VTOL_Translation}, and suppose that Assumption \ref{Assum:VTOL_1Landmark}
	is satisfied. Couple the inertial and landmark measurements $y_{i}^{\mathcal{B}}=R{\rm v}_{i}^{\mathcal{I}}$
	and $z_{j}^{\mathcal{B}}=R(p_{j}^{\mathcal{I}}-P)$ for all $i=1,2,\ldots,N_{1}$
	and $j=1,2,\ldots,N_{2}$ with the observer in \eqref{eq:VTOL_ObsvCompact}
	and the correction factors in \eqref{eq:VTOL_Wcorr}. Let $\gamma_{o}$,
	$k_{o1}$, $k_{o2}$, and $k_{o3}$ be positive constants and $\tilde{R}_{o}(0)\notin\mathcal{U}_{s}$
	(see Definition \ref{def:Unstable-set}). Define the set:
	\begin{align}
		\mathcal{S}_{o}= & \{(\tilde{R}_{o},\tilde{\Omega}_{o},\tilde{P}_{o},\tilde{V}_{o})\in\mathbb{SO}\left(3\right)\times\mathbb{R}^{3}\times\mathbb{R}^{3}\times\mathbb{R}^{3}\,|\nonumber \\
		& \hspace{6em}\tilde{R}_{o}=\mathbf{I}_{3},\tilde{\Omega}_{o}=\tilde{P}_{o}=\tilde{V}_{o}=0_{3\times1}\}\label{eq:VTOL_Set1}
	\end{align}
	Then, the set $\mathcal{S}_{o}$ is uniformly almost globally exponentially
	stable.
\end{thm}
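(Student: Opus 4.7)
The plan is to decouple the error dynamics into a translational block on $\mathbb{R}^{3}\times\mathbb{R}^{3}$ and a rotational block on $\mathbb{SO}\left(3\right)\times\mathbb{R}^{3}$, then analyze each with its own Lyapunov argument. First I would differentiate \eqref{eq:VTOL_Rerr}--\eqref{eq:VTOL_Verr} and substitute from the observer \eqref{eq:VTOL_Restdot}--\eqref{eq:VTOL_Vestdot} together with the correction factors \eqref{eq:VTOL_Wcorr}. For the translational errors, inserting $w_{V}$ and $w_{a}$ and invoking the identity $\hat{R}^{\top}\tilde{R}_{o}^{\top}=(\tilde{R}_{o}\hat{R})^{\top}=R^{\top}$ produces the clean cancellation
\begin{equation*}
\dot{\tilde{P}}_{o}=\tilde{V}_{o}-k_{o2}\tilde{P}_{o},\qquad\dot{\tilde{V}}_{o}=-k_{o3}\tilde{P}_{o},
\end{equation*}
i.e.\ a linear time-invariant system whose $6$-by-$6$ state matrix has (per coordinate) characteristic polynomial $\lambda^{2}+k_{o2}\lambda+k_{o3}$, Hurwitz for all positive $k_{o2},k_{o3}$. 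This already yields global exponential decay of $(\tilde{P}_{o},\tilde{V}_{o})$ to $0_{3\times 1}$, uniformly in initial conditions and completely decoupled from the rotational subsystem.

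For the rotational block, I would use identity \eqref{eq:VTOL_R_Identity1} together with the noise-free hypothesis $R_{y}=R$ (so that $\hat{R}w_{\Omega}=k_{o1}\tilde{R}_{o}^{\top}\psi_{o}$ with $\psi_{o}:=\mathbf{vex}(\boldsymbol{\mathcal{P}}_{a}(\tilde{R}_{o}))$) to reduce the attitude error to
\begin{equation*}
\dot{\tilde{R}}_{o}=-[\tilde{\Omega}_{o}+k_{o1}\psi_{o}]_{\times}\tilde{R}_{o}.
\end{equation*}
The tailored choices $\hat{J}=\tilde{R}_{o}^{\top}J\tilde{R}_{o}$ and $\hat{\mathcal{T}}=\tilde{R}_{o}^{\top}\mathcal{T}$ in \eqref{eq:VTOL_ObsvCompact} are engineered precisely so that premultiplying \eqref{eq:VTOL_Omestdot} by $\tilde{R}_{o}$ and subtracting from the true rigid-body equation causes the torque and the $[\hat{\Omega}]_{\times}\hat{R}w_{\Omega}$ contributions to collapse; after using $\tilde{R}_{o}\hat{\Omega}=\Omega-\tilde{\Omega}_{o}$, what remains is $J\dot{\tilde{\Omega}}_{o}=\gamma_{o}\psi_{o}$ plus a pair of cross-terms $[J\tilde{\Omega}_{o}]_{\times}\Omega+J[\tilde{\Omega}_{o}]_{\times}\Omega$ whose contraction against $\tilde{\Omega}_{o}^{\top}$ vanishes by the scalar triple-product identity $\tilde{\Omega}_{o}\!\cdot\!(J\tilde{\Omega}_{o}\times\Omega)+J\tilde{\Omega}_{o}\!\cdot\!(\tilde{\Omega}_{o}\times\Omega)=0$. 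This erases the dependence on the unknown $\Omega$ from the Lyapunov derivative.

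I would then take the composite Lyapunov candidate
\begin{equation*}
\mathcal{L}_{R\Omega}=2\gamma_{o}\|\tilde{R}_{o}\|_{{\rm I}}+\tfrac{1}{2}\tilde{\Omega}_{o}^{\top}J\tilde{\Omega}_{o}+\varepsilon\,\psi_{o}^{\top}J\tilde{\Omega}_{o},
\end{equation*}
with $\varepsilon>0$ sufficiently small to keep $\mathcal{L}_{R\Omega}$ positive definite. Identity \eqref{eq:VTOL_R_Identity2} gives $\tfrac{d}{dt}\|\tilde{R}_{o}\|_{{\rm I}}=-\tfrac{1}{2}\psi_{o}^{\top}\tilde{\Omega}_{o}-\tfrac{k_{o1}}{2}\|\psi_{o}\|^{2}$, and the cancellation above yields $\tilde{\Omega}_{o}^{\top}J\dot{\tilde{\Omega}}_{o}=\gamma_{o}\psi_{o}^{\top}\tilde{\Omega}_{o}$, so the first two Lyapunov terms telescope to $-\gamma_{o}k_{o1}\|\psi_{o}\|^{2}$. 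The cross term supplies a $-\varepsilon\gamma_{o}\|\psi_{o}\|^{2}$ contribution from $\psi_{o}^{\top}J\dot{\tilde{\Omega}}_{o}$ plus an indefinite quadratic in $\tilde{\Omega}_{o}$ from $\dot{\psi}_{o}^{\top}J\tilde{\Omega}_{o}$ that is dominated for small $\varepsilon$. Lemma \ref{Lemm:vex_RI} gives $\|\psi_{o}\|^{2}=4(1-\|\tilde{R}_{o}\|_{{\rm I}})\|\tilde{R}_{o}\|_{{\rm I}}$, so excluding the forward-invariant unstable set $\mathcal{U}_{s}$ of Definition \ref{def:Unstable-set} (where $\|\tilde{R}_{o}\|_{{\rm I}}=1$ and $\psi_{o}=0$) bounds $1-\|\tilde{R}_{o}\|_{{\rm I}}$ away from zero along trajectories and therefore $\|\psi_{o}\|^{2}\ge c\,\|\tilde{R}_{o}\|_{{\rm I}}$ for some $c>0$. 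This produces $\dot{\mathcal{L}}_{R\Omega}\le-\kappa\mathcal{L}_{R\Omega}$, hence almost global exponential stability of the rotational subsystem; combined with the translational estimate this delivers uniform almost global exponential stability of $\mathcal{S}_{o}$.

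The hard part is twofold. First, the bookkeeping needed to confirm that every $\Omega$-dependent cross term in $\tilde{\Omega}_{o}^{\top}J\dot{\tilde{\Omega}}_{o}$ really does cancel (via the inertia-weighted triple-product identity above) is delicate, because $J[\tilde{\Omega}_{o}]_{\times}$ and $[J\tilde{\Omega}_{o}]_{\times}$ are easy to conflate and the cancellation relies on a scalar-triple-product move rather than a matrix identity. Second, fixing the cross-term coefficient $\varepsilon$ small enough to preserve positive definiteness of $\mathcal{L}_{R\Omega}$ while still dominating the sign-indefinite contribution of $\dot{\psi}_{o}^{\top}J\tilde{\Omega}_{o}$ requires the explicit Jacobian $\dot{\psi}_{o}=E(\tilde{R}_{o})(\tilde{\Omega}_{o}+k_{o1}\psi_{o})$ with $E(\tilde{R}_{o})=\tfrac{1}{2}({\rm Tr}\{\tilde{R}_{o}\}\mathbf{I}_{3}-\tilde{R}_{o}^{\top})$ and careful bounding away from $\mathcal{U}_{s}$; both are standard but fiddly steps in second-order observer analysis on the Lie group.
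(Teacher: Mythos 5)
Your overall architecture matches the paper's: the translational errors collapse to the autonomous linear system $\dot{\tilde{P}}_{o}=\tilde{V}_{o}-k_{o2}\tilde{P}_{o}$, $\dot{\tilde{V}}_{o}=-k_{o3}\tilde{P}_{o}$ (your Hurwitz/LTI argument is a cleaner equivalent of the paper's cross-term Lyapunov function $\mathcal{L}_{o2}$), the torque and $w_{\Omega}$ cancellations in $J\dot{\tilde{\Omega}}_{o}$ are exactly those of \eqref{eq:VTOL_Omerr_dot}, and your strict Lyapunov function with a $\psi_{o}^{\top}J\tilde{\Omega}_{o}$ cross term, combined with Lemma \ref{Lemm:vex_RI} to convert $\|\psi_{o}\|^{2}$ into $\|\tilde{R}_{o}\|_{{\rm I}}$ away from $\mathcal{U}_{s}$ (here $\psi_{o}=\mathbf{vex}(\boldsymbol{\mathcal{P}}_{a}(\tilde{R}_{o}))$), is the paper's $\mathcal{L}_{o1}$ up to rescaling.

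The gap is in the accounting of the cross term's derivative. Since $w_{o}=-\gamma_{o}\tilde{R}_{o}^{\top}\psi_{o}$, the term $-\tilde{R}_{o}w_{o}=+\gamma_{o}\psi_{o}$ in $J\dot{\tilde{\Omega}}_{o}$ yields $\varepsilon\,\psi_{o}^{\top}J\dot{\tilde{\Omega}}_{o}=+\varepsilon\gamma_{o}\|\psi_{o}\|^{2}+\cdots$, not $-\varepsilon\gamma_{o}\|\psi_{o}\|^{2}$: this contribution is \emph{positive} and must be absorbed into the $-\gamma_{o}k_{o1}\|\psi_{o}\|^{2}$ already produced by the first two terms (harmless for $\varepsilon<k_{o1}$, but it is not the source of negativity you claim). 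More seriously, you treat $\varepsilon\,\dot{\psi}_{o}^{\top}J\tilde{\Omega}_{o}$ as an indefinite quadratic merely ``to be dominated,'' yet it is the \emph{only} possible source of a $-\|\tilde{\Omega}_{o}\|^{2}$ term: the telescoped main part gives only $-\gamma_{o}k_{o1}\|\psi_{o}\|^{2}$, which contains no negativity in $\tilde{\Omega}_{o}$. Without extracting $-\tfrac{\varepsilon}{2}\tilde{\Omega}_{o}^{\top}\Psi(\tilde{R}_{o})^{\top}J\tilde{\Omega}_{o}$ from $\dot{\psi}_{o}=-\tfrac{1}{2}\Psi(\tilde{R}_{o})(\tilde{\Omega}_{o}+k_{o1}\psi_{o})$ (note your $E(\tilde{R}_{o})$ is missing the minus sign), and verifying that the symmetric part of $\Psi(\tilde{R}_{o})^{\top}J$ is positive definite on the trajectory's invariant sublevel set, your bound is at best $\dot{\mathcal{L}}_{R\Omega}\leq-c\,\|\tilde{R}_{o}\|_{{\rm I}}$, which does not imply $\dot{\mathcal{L}}_{R\Omega}\leq-\kappa\mathcal{L}_{R\Omega}$ because $\mathcal{L}_{R\Omega}$ contains $\tfrac{1}{2}\tilde{\Omega}_{o}^{\top}J\tilde{\Omega}_{o}$. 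This is precisely the step the paper performs in \eqref{eq:VTOL_R_LyapC1_Aux}--\eqref{eq:VTOL_Lyap_Lo1dot}, where the $-\tfrac{1}{4\delta_{o1}}\|\tilde{\Omega}_{o}\|^{2}$ entry of $A_{o1}$ comes from the very term you discard. The gap is repairable, but as written the exponential decay of $\tilde{\Omega}_{o}$ is not established.
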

\begin{proof}It is stated in Theorem \ref{thm:Theorem1} that error
	signals converge exponentially to the equilibrium point from any initial
	condition except for the three repeller orientations defined in Definition
	\ref{def:Unstable-set} ($\tilde{R}_{o}(0)\notin\mathcal{U}_{s}$).
	Consider the attitude error $\tilde{R}_{o}=R\hat{R}^{\top}$ in \eqref{eq:VTOL_Rerr}.
	From \eqref{eq:VTOL_Rotation} and \eqref{eq:VTOL_Restdot}, one has
	\begin{align}
		\dot{\tilde{R}}_{o} & =\dot{R}\hat{R}^{\top}+R\dot{\hat{R}}^{\top}\nonumber \\
		& =-\left[\Omega\right]_{\times}R\hat{R}^{\top}+R(-[w_{\Omega}]_{\times}\hat{R}^{\top}+\hat{R}^{\top}[\hat{\Omega}]_{\times})\nonumber \\
		& =-\left[\Omega\right]_{\times}\tilde{R}_{o}+\tilde{R}_{o}[\hat{\Omega}]_{\times}-[Rw_{\Omega}]_{\times}\tilde{R}_{o}\nonumber \\
		& =-[\Omega-\tilde{R}_{o}\hat{\Omega}+Rw_{\Omega}]_{\times}\tilde{R}_{o}=-[\tilde{\Omega}_{o}+Rw_{\Omega}]_{\times}\tilde{R}_{o}\label{eq:VTOL_Rerr_dot}
	\end{align}
	where $\tilde{\Omega}_{o}$ is defined in \eqref{eq:VTOL_Omerr}.
	Using \eqref{eq:VTOL_Ecul_Dist} and \eqref{eq:VTOL_Rerr_dot}, one
	obtains \cite{hashim2019SO3Wiley}
	\begin{align}
		||\dot{\tilde{R}}_{o}||_{{\rm I}}= & -\frac{1}{4}{\rm Tr}\{\dot{\tilde{R}}_{o}\}=\frac{1}{4}{\rm Tr}\{[\tilde{\Omega}_{o}+Rw_{\Omega}]_{\times}\boldsymbol{\mathcal{P}}_{a}(\tilde{R}_{o})\}\nonumber \\
		= & -\frac{1}{2}\mathbf{vex}(\boldsymbol{\mathcal{P}}_{a}(\tilde{R}_{o}))^{\top}(\tilde{\Omega}_{o}+Rw_{\Omega})\label{eq:VTOL_RerrI_dot}
	\end{align}
	From \eqref{eq:VTOL_Rotation}, \eqref{eq:VTOL_Omerr}, \eqref{eq:VTOL_Omestdot},
	and \eqref{eq:VTOL_Rerr_dot}, one finds
	\begin{align}
		J\dot{\tilde{\Omega}}_{o}= & J\dot{\Omega}-J\dot{\tilde{R}}_{o}\hat{\Omega}-J\tilde{R}_{o}\dot{\hat{\Omega}}\nonumber \\
		= & [J\Omega]_{\times}\Omega+\mathcal{T}+J[\tilde{\Omega}_{o}+Rw_{\Omega}]_{\times}\tilde{R}_{o}\hat{\Omega}-J\tilde{R}_{o}\dot{\hat{\Omega}}\nonumber \\
		= & [J\Omega]_{\times}\Omega+J[\tilde{\Omega}_{o}]_{\times}\tilde{R}_{o}\hat{\Omega}-[J\tilde{R}_{o}\hat{\Omega}]_{\times}\tilde{R}_{o}\hat{\Omega}\nonumber \\
		& \mathcal{T}+[J\tilde{R}_{o}\hat{\Omega}]_{\times}\tilde{R}_{o}\hat{\Omega}+J[Rw_{\Omega}]_{\times}\tilde{R}_{o}\hat{\Omega}-\tilde{R}_{o}\hat{J}\dot{\hat{\Omega}}\nonumber \\
		= & S(\Omega)\tilde{\Omega}_{o}-[J\tilde{\Omega}_{o}]_{\times}\tilde{\Omega}_{o}-\tilde{R}_{o}w_{o}\label{eq:VTOL_Omerr_dot}
	\end{align}
	such that
	\begin{align}
		& [J\Omega]_{\times}\Omega+J[\tilde{\Omega}_{o}]_{\times}\tilde{R}_{o}\hat{\Omega}-[J\tilde{R}_{o}\hat{\Omega}]_{\times}\tilde{R}_{o}\hat{\Omega}\nonumber \\
		& =[J\Omega]_{\times}\Omega+(J[\tilde{\Omega}_{o}]_{\times}-[J\Omega]_{\times}+[J\tilde{\Omega}_{o}]_{\times})\tilde{R}_{o}\hat{\Omega}\nonumber \\
		& =[J\Omega]_{\times}\tilde{\Omega}_{o}-J[\Omega]_{\times}\tilde{\Omega}_{o}-[\Omega]_{\times}J\tilde{\Omega}_{o}-[J\tilde{\Omega}_{o}]_{\times}\tilde{\Omega}_{o}\nonumber \\
		& =([J\Omega]_{\times}-J[\Omega]_{\times}-[\Omega]_{\times}J)\tilde{\Omega}_{o}-[J\tilde{\Omega}_{o}]_{\times}\tilde{\Omega}_{o}\nonumber \\
		& =S(\Omega)\tilde{\Omega}_{o}-[J\tilde{\Omega}_{o}]_{\times}\tilde{\Omega}_{o}\label{eq:VTOL_SOmega}
	\end{align}
	where $S(\Omega)=[J\Omega]_{\times}-J[\Omega]_{\times}-[\Omega]_{\times}J\in\mathfrak{so}\left(3\right)$,
	$[\Omega]_{\times}\tilde{\Omega}_{o}=-[\tilde{\Omega}_{o}]_{\times}\Omega$,
	and the identity in \eqref{eq:VTOL_R_Identity1} is employed. From
	\eqref{eq:VTOL_Translation}, \eqref{eq:VTOL_Perr}, and \eqref{eq:VTOL_Pestdot},
	one obtains
	\begin{align}
		\dot{\tilde{P}}_{o} & =\tilde{V}_{o}-k_{o2}\tilde{P}_{o}\label{eq:VTOL_Perr_dot}
	\end{align}
	From \eqref{eq:VTOL_Translation}, \eqref{eq:VTOL_Verr}, and \eqref{eq:VTOL_Vestdot},
	one has
	\begin{align}
		\dot{\tilde{V}}_{o} & =-k_{o3}\tilde{P}_{o}\label{eq:VTOL_Verr_dot}
	\end{align}
	Define the following cost function $L_{1}:\mathbb{SO}\left(3\right)\times\mathbb{R}^{3}\rightarrow\mathbb{R}_{+}$:
	\begin{equation}
		L_{1}=2||\tilde{R}_{o}||_{{\rm I}}+\frac{1}{2\gamma_{o}}\tilde{\Omega}_{o}^{\top}J\tilde{\Omega}_{o}\label{eq:VTOL_LyapC1}
	\end{equation}
	In view of \eqref{eq:VTOL_RerrI_dot}, \eqref{eq:VTOL_Omerr_dot},
	and the correction factors $w_{\Omega}$ and $w_{o}$ in \eqref{eq:VTOL_Wcorr},
	one finds that the derivative of \eqref{eq:VTOL_LyapC1} is as follows:
	\begin{align}
		\dot{L}_{1}= & -\mathbf{vex}(\boldsymbol{\mathcal{P}}_{a}(\tilde{R}_{o}))^{\top}(\tilde{\Omega}_{o}+Rw_{\Omega})\nonumber \\
		& +\frac{1}{\gamma_{o}}\tilde{\Omega}_{o}^{\top}(S(\Omega)\tilde{\Omega}_{o}-[J\tilde{\Omega}_{o}]_{\times}\tilde{\Omega}_{o}-\tilde{R}_{o}w_{o})\nonumber \\
		= & -k_{o1}||\mathbf{vex}(\boldsymbol{\mathcal{P}}_{a}(\tilde{R}_{o}))||^{2}\label{eq:VTOL_LyapC1dot}
	\end{align}
	where $[\tilde{\Omega}_{o}]_{\times}\tilde{\Omega}_{o}=0_{3\times1}$.
	It becomes clear that $\dot{L}_{1}$ is negative, continuous, and
	strictly decreasing, and consequently, $L_{1}$ is bounded indicating
	that $\mathbf{vex}(\boldsymbol{\mathcal{P}}_{a}(\tilde{R}_{o}))$
	and $\tilde{\Omega}_{o}$ are also bounded. Hence, $\ddot{L}_{1}$
	is bounded, and according to Barbalat Lemma, $\lim_{t\rightarrow\infty}\mathbf{vex}(\boldsymbol{\mathcal{P}}_{a}(\tilde{R}_{o}))=0_{3\times1}$
	shows that $||\tilde{R}_{o}||_{{\rm I}}\rightarrow0$, $||\dot{\tilde{R}}_{o}||_{{\rm I}}\rightarrow0$,
	$\lim_{t\rightarrow\infty}\dot{\tilde{R}}_{o}=0_{3\times3}$, $\lim_{t\rightarrow\infty}\tilde{R}_{o}=\mathbf{I}_{3}$,
	and $\lim_{t\rightarrow\infty}w_{\Omega}=\lim_{t\rightarrow\infty}w_{o}=0_{3\times1}$.
	Hence, $\lim_{t\rightarrow\infty}\tilde{\Omega}_{o}=0_{3\times1}$,
	and thereby, $\lim_{t\rightarrow\infty}L_{1}=0$. The derivative of
	the vex operator is equivalent to \cite{hashim2019AtiitudeSurvey}
	\begin{align}
		\mathbf{vex}(\boldsymbol{\mathcal{P}}_{a}(\dot{\tilde{R}}_{o})) & =-\frac{1}{2}\Psi(\tilde{R}_{o})(\tilde{\Omega}_{o}+Rw_{\Omega})\label{eq:VTOL_vex_dot}
	\end{align}
	where $\Psi(\tilde{R}_{o})={\rm Tr}\{\tilde{R}_{o}\}\mathbf{I}_{3}-\tilde{R}_{o}$.
	Recalling Lemma \ref{Lemm:vex_RI}, \eqref{eq:VTOL_vex_dot} and \eqref{eq:VTOL_Omerr_dot},
	one has
	\begin{align}
		& \frac{1}{2\delta_{o1}}\frac{d}{dt}\mathbf{vex}(\boldsymbol{\mathcal{P}}_{a}(\tilde{R}_{o}))^{\top}\tilde{\Omega}_{o}=-\frac{1}{4\delta_{o1}}(\tilde{\Omega}_{o}+Rw_{\Omega})^{\top}\Psi(\tilde{R}_{o})\tilde{\Omega}_{o}\nonumber \\
		& \hspace{1em}+\frac{1}{2\delta_{o1}}\mathbf{vex}(\boldsymbol{\mathcal{P}}_{a}(\tilde{R}_{o}))^{\top}J^{-1}(S(\Omega)\tilde{\Omega}_{o}-[J\tilde{\Omega}_{o}]_{\times}\tilde{\Omega}_{o}-\tilde{R}_{o}w_{o})\nonumber \\
		& \leq-\frac{2\gamma_{o}c_{o3}}{\delta_{o1}}||\tilde{R}_{o}||_{{\rm I}}-\frac{1}{4\delta_{o1}}||\tilde{\Omega}_{o}||^{2}+\frac{c_{o2}}{2\delta_{o1}}||\tilde{\Omega}_{o}||\sqrt{||\tilde{R}_{o}||_{{\rm I}}}\label{eq:VTOL_R_LyapC1_Aux}
	\end{align}
	where $\delta_{o1}$ is a positive constant, $\eta_{\Omega}=\sup_{t\geq0}S(\Omega)$,
	$c_{o1}=2\sqrt{1-||\tilde{R}_{o}(0)||_{{\rm I}}}$, $c_{o2}=\frac{c_{o1}(2\eta_{\Omega}+\underline{\lambda}_{J}\eta_{\Omega_{o}}+3\underline{\lambda}_{J}k_{o1})+0.5\eta_{\Omega_{o}}}{\underline{\lambda}_{J}}$,
	and $c_{o3}=\frac{c_{o1}^{2}}{\overline{\lambda}_{J}}$. In view of
	$L_{1}$ in \eqref{eq:VTOL_LyapC1}, define the following Lyapunov
	function candidate $\mathcal{L}_{o1}:\mathbb{SO}\left(3\right)\times\mathbb{R}^{3}\rightarrow\mathbb{R}_{+}$:
	\begin{equation}
		\mathcal{L}_{o1}=2||\tilde{R}_{o}||_{{\rm I}}+\frac{1}{2\gamma_{o}}\tilde{\Omega}_{o}^{\top}J\tilde{\Omega}_{o}+\frac{1}{2\delta_{o1}}\mathbf{vex}(\boldsymbol{\mathcal{P}}_{a}(\tilde{R}_{o}))^{\top}\tilde{\Omega}_{o}\label{eq:VTOL_Lyap_Lo1}
	\end{equation}
	Based on Lemma \ref{Lemm:vex_RI}, one finds{\small{}
		\[
		e_{o1}^{\top}\underbrace{\left[\begin{array}{cc}
				2 & -\frac{\overline{\lambda}_{J}c_{o1}}{4\delta_{o1}}\\
				-\frac{\overline{\lambda}_{J}c_{o1}}{4\delta_{o1}} & \frac{\underline{\lambda}_{J}}{2\gamma_{o}}
			\end{array}\right]}_{M_{1}}e_{o1}\leq\mathcal{L}_{o1}\leq e_{o1}^{\top}\underbrace{\left[\begin{array}{cc}
				2 & \frac{\overline{\lambda}_{J}c_{o1}}{4\delta_{o1}}\\
				\frac{\overline{\lambda}_{J}c_{o1}}{4\delta_{o1}} & \frac{\underline{\lambda}_{J}}{2\gamma_{o}}
			\end{array}\right]}_{M_{2}}e_{o1}
		\]
	}where $e_{o1}=[\sqrt{||\tilde{R}_{o}||_{{\rm I}}},||\tilde{\Omega}_{o}||]^{\top}$.
	The matrices $M_{1}$ and $M_{2}$ can be made positive by selecting
	$\delta_{o1}>\frac{\overline{\lambda}_{J}c_{o1}}{4}\sqrt{\frac{\gamma_{o}}{\underline{\lambda}_{J}}}$.
	Therefore, from \eqref{eq:VTOL_LyapC1}, \eqref{eq:VTOL_LyapC1dot},
	and \eqref{eq:VTOL_R_LyapC1_Aux}, the derivative of \eqref{eq:VTOL_Lyap_Lo1}
	becomes
	\begin{align}
		\dot{\mathcal{L}}_{o1} & \leq-\frac{1}{4\delta_{o1}}e_{o1}^{\top}\underbrace{\left[\begin{array}{cc}
				8(k_{o1}\delta_{o1}-\gamma_{o}c_{o3}) & c_{o2}\\
				c_{o2} & 1
			\end{array}\right]}_{A_{o1}}e_{o1}\label{eq:VTOL_Lyap_Lo1dot}
	\end{align}
	$A_{o1}$ can be made positive by selecting $\delta_{o1}>\frac{c_{o2}^{2}-\gamma_{o}c_{o3}}{8k_{o1}}$.
	By selecting $\delta_{o1}>\max\{\frac{\overline{\lambda}_{J}c_{o1}}{4}\sqrt{\frac{\gamma_{o}}{\underline{\lambda}_{J}}},\frac{c_{o2}^{2}+\gamma_{o}c_{o3}}{8k_{o1}}\}$
	and defining $\underline{\lambda}_{A_{o1}}$ as the minimum eigenvalue
	of $A_{o1}$, one has
	\begin{align}
		\dot{\mathcal{L}}_{o1} & \leq-\underline{\lambda}_{A_{o1}}||\tilde{R}||_{{\rm I}}-\underline{\lambda}_{A_{o1}}||\tilde{\Omega}_{o}||^{2}\label{eq:VTOL_Lyap_Lo1Final}
	\end{align}
	Consider the following Lyapunov function candidate $\mathcal{L}_{o2}:\mathbb{R}^{3}\times\mathbb{R}^{3}\rightarrow\mathbb{R}_{+}$:
	\begin{equation}
		\mathcal{L}_{o2}=\frac{1}{2}\tilde{P}_{o}^{\top}\tilde{P}_{o}+\frac{1}{2k_{o3}}\tilde{V}_{o}^{\top}\tilde{V}_{o}-\delta_{o2}\tilde{P}_{o}^{\top}\tilde{V}_{o}\label{eq:VTOL_Lyap_Lo2}
	\end{equation}
	It can be easily shown that $\mathcal{L}_{o2}$ follows
	\[
	e_{o2}^{\top}\underbrace{\left[\begin{array}{cc}
			\frac{1}{2} & -\frac{\delta_{o2}}{2}\\
			-\frac{\delta_{o2}}{2} & \frac{1}{2k_{o3}}
		\end{array}\right]}_{M_{3}}e_{o2}\leq\mathcal{L}_{o2}\leq e_{o2}^{\top}\underbrace{\left[\begin{array}{cc}
			\frac{1}{2} & \frac{\delta_{o2}}{2}\\
			\frac{\delta_{o2}}{2} & \frac{1}{2k_{o3}}
		\end{array}\right]}_{M_{4}}e_{o2}
	\]
	where $e_{o2}=[||\tilde{P}_{o}||,||\tilde{V}_{o}||]^{\top}$. It is
	evident that $M_{3}$ and $M_{4}$ are positive if $\delta_{o2}<\frac{1}{\sqrt{k_{o3}}}$.
	From \eqref{eq:VTOL_Perr_dot} and \eqref{eq:VTOL_Verr_dot}, one
	finds
	\begin{align}
		\dot{\mathcal{L}}_{o2}= & -k_{o2}\tilde{P}_{o}^{\top}\tilde{P}_{o}-\delta_{o2}\tilde{V}_{o}^{\top}\tilde{V}_{o}+\delta_{o2}k_{o2}\tilde{P}_{o}^{\top}\tilde{V}_{o}+\delta_{o2}k_{o3}\tilde{P}_{o}^{\top}\tilde{P}_{o}\nonumber \\
		\leq & -e_{o2}^{\top}\underbrace{\left[\begin{array}{cc}
				(k_{o2}-\delta_{o2}k_{o3}) & \frac{k_{o2}\delta_{o2}}{2}\\
				\frac{k_{o2}\delta_{o2}}{2} & \delta_{o2}
			\end{array}\right]}_{A_{o2}}e_{o2}\label{eq:VTOL_LyapL2_dot}
	\end{align}
	It becomes apparent that $A_{o2}$ is made positive by selecting $\delta_{o2}<\frac{4k_{o2}}{k_{o2}^{2}+k_{o3}}$.
	By selecting $\delta_{o2}<\min\{\frac{1}{\sqrt{k_{o3}}},\frac{4k_{o2}}{k_{o2}^{2}+k_{o3}}\}$
	and defining $\underline{\lambda}_{A_{o2}}$ as the minimum eigenvalue
	of $A_{o2}$, one shows that
	\begin{equation}
		\dot{\mathcal{L}}_{o2}\leq-\underline{\lambda}_{A_{o2}}||\tilde{P}_{o}||^{2}-\underline{\lambda}_{A_{o2}}||\tilde{V}_{o}||^{2}\label{eq:VTOL_Lyap_Lo2Final}
	\end{equation}
	From \eqref{eq:VTOL_Lyap_Lo1} and \eqref{eq:VTOL_Lyap_Lo2}, let
	us define the total Lyapunov function candidate for the observer design
	$\mathcal{L}_{oT}:\mathbb{SO}\left(3\right)\times\mathbb{R}^{3}\times\mathbb{R}^{3}\times\mathbb{R}^{3}\rightarrow\mathbb{R}_{+}$
	as follows:
	\begin{equation}
		\mathcal{L}_{oT}=\mathcal{L}_{o1}+\mathcal{L}_{o2}\label{eq:VTOL_LyapLo_Total}
	\end{equation}
	From \eqref{eq:VTOL_Lyap_Lo1Final} and \eqref{eq:VTOL_LyapL2_dot},
	one obtains
	\begin{equation}
		\dot{\mathcal{L}}_{oT}\leq-\underline{\lambda}_{A_{o}}(||\tilde{R}_{o}||_{{\rm I}}+||\tilde{\Omega}_{o}||^{2}+||\tilde{P}_{o}||^{2}+||\tilde{V}_{o}||^{2})\label{eq:VTOL_LyapLodot_Total}
	\end{equation}
	with $\underline{\lambda}_{A_{o}}=\min\{\underline{\lambda}_{A_{o1}},\underline{\lambda}_{A_{o2}}\}$.
	Define $\eta_{o}=\max\{\overline{\lambda}(M_{1}),\overline{\lambda}(M_{2}),\overline{\lambda}(M_{3}),\overline{\lambda}(M_{4})\}$.
	From \eqref{eq:VTOL_Lyap_Lo1}, \eqref{eq:VTOL_Lyap_Lo2}, \eqref{eq:VTOL_Lyap_Lo1Final},
	\eqref{eq:VTOL_Lyap_Lo2Final}, \eqref{eq:VTOL_LyapLo_Total}, and
	\eqref{eq:VTOL_LyapLodot_Total}, the following inequality is obtained:
	\begin{equation}
		\mathcal{L}_{oT}(t)\leq\mathcal{L}_{oT}(0)\exp(-\underline{\lambda}_{A_{o}}t/\eta_{o})\label{eq:VTOL_Lo_Total-2}
	\end{equation}
	such that $\lim_{t\rightarrow\infty}\tilde{R}_{o}=\mathbf{I}_{3}$,
	$\lim_{t\rightarrow\infty}\tilde{\Omega}_{o}=0_{3\times1}$, $\lim_{t\rightarrow\infty}\tilde{P}_{o}=0_{3\times1}$,
	and $\lim_{t\rightarrow\infty}\tilde{V}_{o}=0_{3\times1}$ exponentially.
	Consequently, the closed loop error signals of the observer design
	are uniformly almost globally exponentially stable and converge to
	the set $\mathcal{S}_{o}$ proving Theorem \ref{thm:Theorem1}.\end{proof}

\section{Observer-based Controller Scheme \label{sec:VTOL_Controller}}

As has been mentioned in the Introduction section, most of the existing
VTOL-UAV observer-based controllers utilize Euler angles representation
which is subject to singularity and fails to represent the attitude
at several configurations. Singularity of these methods leads to local
results. In addition, studies that use unit-quaternion suffer from
non-uniqueness in the attitude representation. This work, on the contrary,
designs the observer and the control laws using a Lie Group matrix
form which allows for unique and global attitude representation. The
objective of this Section is to design almost global control laws
for torque $\mathcal{T}\in\mathbb{R}^{3}$ and thrust $\Im\in\mathbb{R}$
to accurately track the VTOL-UAV position and velocity along the desired
trajectories using the estimates from Section \ref{sec:VTOL_Observer}:
$\hat{R}$, $\hat{\Omega}$, $\hat{P}$, and $\hat{V}$. Define $R_{d}\in\mathbb{SO}\left(3\right)$,
$\Omega_{d}\in\mathbb{R}^{3}$, $P_{d}\in\mathbb{R}^{3}$, and $V_{d}\in\mathbb{R}^{3}$
as the VTOL-UAV desired attitude, angular velocity, position, and
linear velocity, respectively. The proposed control strategy aims
to drive 
\begin{align*}
	R & \rightarrow R_{d}\\
	\Omega & \rightarrow\Omega_{d}\\
	P & \rightarrow P_{d}\\
	V & \rightarrow V_{d}
\end{align*}
Hence, define the errors in attitude, angular velocity, position,
and linear velocity as
\begin{align}
	\tilde{R}_{c}= & RR_{d}^{\top}\label{eq:VTOL_Rerr-c}\\
	\tilde{\Omega}_{c}= & \Omega-\tilde{R}_{c}\Omega_{d}\label{eq:VTOL_Omerr-c}\\
	\tilde{P}_{c}= & P-P_{d}\label{eq:VTOL_Perr-c}\\
	\tilde{V}_{c}= & V-V_{d}\label{eq:VTOL_Verr-c}
\end{align}
where $\tilde{R}_{c}\in\mathbb{SO}\left(3\right)$, $\tilde{\Omega}_{c}\in\mathbb{R}^{3}$,
$\tilde{P}_{c}\in\mathbb{R}^{3}$, and $\tilde{V}_{c}\in\mathbb{R}^{3}$.
The proposed control laws aim to achieve $\lim_{t\rightarrow\infty}\tilde{R}_{c}=\mathbf{I}_{3}$
and $\lim_{t\rightarrow\infty}\tilde{\Omega}_{c}=\lim_{t\rightarrow\infty}\tilde{P}_{c}=\lim_{t\rightarrow\infty}\tilde{V}_{c}=0_{3\times1}$.
Based on \eqref{eq:VTOL_Rotation}, the desired attitude dynamics
are as follows:
\begin{equation}
	\dot{R}_{d}=-[\Omega_{d}]_{\times}R_{d}\label{eq:VTOL_Rd_dot}
\end{equation}
Rewrite the velocity dynamics $\dot{V}=ge_{3}-\frac{\Im}{m}R^{\top}e_{3}$
in \eqref{eq:VTOL_Translation} as 
\begin{align*}
	\dot{V} & =ge_{3}-\frac{\Im}{m}R_{d}^{\top}e_{3}-\frac{\Im}{m}(R^{\top}-R_{d}^{\top})e_{3}\\
	& =F-\frac{\Im}{m}(R^{\top}-R_{d}^{\top})e_{3}
\end{align*}
where $F$ denotes an intermediary control input defined by
\begin{equation}
	F=ge_{3}-\frac{\Im}{m}R_{d}^{\top}e_{3}=[f_{1},f_{2},f_{3}]^{\top}\in\mathbb{R}^{3}\label{eq:VTOL_F_1}
\end{equation}
Consequently, $\Im=m||ge_{3}-F||$.
\begin{lem}
	\label{lem:Lemma_Qd}\cite{Roberts2009} Consider the dynamics in
	\eqref{eq:VTOL_Translation} and the intermediary control input in
	\eqref{eq:VTOL_F_1} with thrust magnitude $\Im=m||ge_{3}-F||$. The
	desired unit-quaternion components $Q_{d}=[q_{d0},q_{d}^{\top}]^{\top}\in\mathbb{S}^{3}$
	are as follows:
	\begin{equation}
		q_{d0}=\sqrt{\frac{m}{2\Im}(g-f_{3})+\frac{1}{2}},\hspace{1em}q_{d}=\left[\begin{array}{c}
			\frac{m}{2\Im q_{d0}}f_{2}\\
			-\frac{m}{2\Im q_{d0}}f_{1}\\
			0
		\end{array}\right]\label{eq:VTOL_Q_Qd}
	\end{equation}
	on condition that $F\neq[0,0,a]^{\top}$ for $a\geq g$. Note that
	$\mathbb{S}^{3}=\{Q_{d}\in\mathbb{R}^{4}|\,||Q_{d}||=1\}$ (see Subsection
	\ref{subsec:Unit-quaternion}). Let $F$ be differentiable with the
	desired angular velocity defined as
	\begin{align}
		\Omega_{d} & =\Xi(F)\dot{F}\label{eq:VTOL_Q_Omd}
	\end{align}
	and
	\begin{equation}
		\Xi(F)=\frac{1}{\alpha_{1}^{2}\alpha_{2}}\left[\begin{array}{ccc}
			-f_{1}f_{2} & -f_{2}^{2}+\alpha_{1}\alpha_{2} & f_{2}\alpha_{2}\\
			f_{1}^{2}-\alpha_{1}\alpha_{2} & f_{1}f_{2} & -f_{1}\alpha_{2}\\
			f_{2}\alpha_{1} & -f_{1}\alpha_{1} & 0
		\end{array}\right]\label{eq:VTOL_Q_Lambda}
	\end{equation}
	where $\alpha_{1}=||ge_{3}-F||$ and $\alpha_{2}=||ge_{3}-F||+g-f_{3}$. \hspace{10pt} $\blacksquare$
\end{lem}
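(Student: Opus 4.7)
The plan is to verify the lemma in two stages: first derive the quaternion formula in \eqref{eq:VTOL_Q_Qd} from the defining relation in \eqref{eq:VTOL_F_1}, and then derive the angular velocity formula in \eqref{eq:VTOL_Q_Omd} by differentiating the thrust direction constraint.

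First, rearranging $F = ge_3 - \frac{\Im}{m}R_d^{\top}e_3$ gives $R_d^{\top}e_3 = \frac{m}{\Im}(ge_3 - F)$. Taking norms and using $\|R_d^{\top}e_3\|=1$ (since $R_d\in\mathbb{SO}(3)$) immediately forces $\Im = m\|ge_3-F\|$. To extract the quaternion, I would substitute $Q_d = [q_{d0},q_d^{\top}]^{\top}$ with $q_d = [q_{d1},q_{d2},0]^{\top}$ into the map \eqref{eq:NAV_Append_SO3}, compute $\mathcal{R}_{Q_d}^{\top}e_3$, and exploit $q_d^{\top}e_3 = 0$ to reduce this vector to $[2q_{d0}q_{d2},\,-2q_{d0}q_{d1},\,q_{d0}^{2}-q_{d1}^{2}-q_{d2}^{2}]^{\top}$. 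Matching each component to $\frac{m}{\Im}[-f_1,-f_2,g-f_3]^{\top}$, together with the unit-norm constraint $q_{d0}^{2}+q_{d1}^{2}+q_{d2}^{2}=1$, yields exactly the expressions in \eqref{eq:VTOL_Q_Qd}. The hypothesis $F\neq[0,0,a]^{\top}$ for $a\geq g$ guarantees $\frac{m}{2\Im}(g-f_3)+\frac{1}{2}>0$, so the square root defining $q_{d0}$ is real and nonzero, and the denominators $q_{d0}$ in the formulas for $q_{d1},q_{d2}$ are well-defined. The convention $q_{d3}=0$ corresponds to fixing the otherwise free yaw degree of freedom.

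Second, to derive $\Omega_d$, I would differentiate $R_d^{\top}e_3 = \alpha_1^{-1}(ge_3-F)$ in time. On the left, using $\dot{R}_d = -[\Omega_d]_{\times}R_d$ from \eqref{eq:VTOL_Rd_dot} together with identity \eqref{eq:VTOL_R_Identity1}, one obtains $\dot{R}_d^{\top}e_3 = R_d^{\top}[\Omega_d]_{\times}e_3 = -R_d^{\top}[e_3]_{\times}\Omega_d$. On the right, using $\dot{\alpha}_1 = -\alpha_1^{-1}(ge_3-F)^{\top}\dot{F}$, one obtains $-\alpha_1^{-1}\dot{F}+\alpha_1^{-3}(ge_3-F)(ge_3-F)^{\top}\dot{F}$. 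Equating the two sides and substituting the quaternion-parameterized $R_d$ (with $q_{d3}=0$) from the first part to eliminate $R_d$ in favor of $F$ yields a linear system in $\Omega_d$. Since $[e_3]_{\times}$ has rank two, only the components of $\Omega_d$ orthogonal to $e_3$ in body frame are pinned by the thrust-direction constraint; the third row of $\Xi(F)$ in \eqref{eq:VTOL_Q_Lambda} is consequently determined by the yaw convention $q_{d3}=0$, giving the $(f_2\alpha_1,-f_1\alpha_1,0)/(\alpha_1^{2}\alpha_2)$ entries. The remaining rows follow from algebraic rearrangement, where the factor $\alpha_2 = \alpha_1 + g - f_3$ emerges as the denominator from the quaternion formulas $q_{d0}^{2} = (m/2\Im)(g-f_3)+1/2 = (\alpha_1+g-f_3)/(2\alpha_1) = \alpha_2/(2\alpha_1)$.

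The main obstacle will be the third step: verifying the explicit entries of $\Xi(F)$ in \eqref{eq:VTOL_Q_Lambda}. The computation is not conceptually deep but requires carefully tracking the interplay between the projector $I - R_d^{\top}e_3(R_d^{\top}e_3)^{\top}$, the skew operator $[e_3]_{\times}$, and the factor $\alpha_2$ arising from the half-angle relation for $q_{d0}$. I would organize the calculation by first expressing $R_d$ in the form $R_d = R_d(q_{d0},q_{d1},q_{d2})$ via \eqref{eq:NAV_Append_SO3}, then substituting $q_{d1},q_{d2}$ in terms of $f_1,f_2$ via \eqref{eq:VTOL_Q_Qd}, and finally inverting the resulting $2\times 3$ relation on the orthogonal complement of $e_3$ while closing the third row through the yaw-fixing convention. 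Since the lemma itself is borrowed from \cite{Roberts2009}, the proof reduces to confirming these substitutions match the stated $\Xi(F)$.
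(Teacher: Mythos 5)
The paper does not prove this lemma: it is imported verbatim from \cite{Roberts2009} and stated without an argument, so there is no in-paper proof to compare against. Your derivation of \eqref{eq:VTOL_Q_Qd} is complete and correct: from $R_d^{\top}e_3=\tfrac{m}{\Im}(ge_3-F)$ with $ge_3-F=[-f_1,-f_2,g-f_3]^{\top}$, the reduction of $\mathcal{R}_{Q_d}^{\top}e_3$ to $[2q_{d0}q_{d2},-2q_{d0}q_{d1},q_{d0}^2-q_{d1}^2-q_{d2}^2]^{\top}$ under $q_{d3}=0$, together with $\|Q_d\|=1$, gives exactly the stated components, and your observation that $q_{d0}^2=\alpha_2/(2\alpha_1)$ correctly explains why the exclusion $F\neq[0,0,a]^{\top}$, $a\geq g$, is precisely the condition for $q_{d0}>0$. (One small mislabel: $\dot{R}_d^{\top}e_3=R_d^{\top}[\Omega_d]_{\times}e_3$ follows from transposing $\dot{R}_d=-[\Omega_d]_{\times}R_d$ and the skew-symmetry of $[\Omega_d]_{\times}$, not from identity \eqref{eq:VTOL_R_Identity1}.)

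The gap is in the second half. The entire nontrivial content of \eqref{eq:VTOL_Q_Lambda} is the explicit matrix $\Xi(F)$, and your proposal stops exactly where that verification begins: you correctly identify that differentiating the unit thrust direction pins only the $e_1,e_2$ components of $\Omega_d$ (since $[e_3]_{\times}$ has rank two) and that the third row must come from the yaw convention, but you never actually check that the resulting linear map is the stated $\Xi(F)$, deferring it as ``the main obstacle.'' As written this is a plan, not a proof, of \eqref{eq:VTOL_Q_Omd}--\eqref{eq:VTOL_Q_Lambda}. A cleaner route that avoids the projector algebra entirely: differentiate the closed-form expressions \eqref{eq:VTOL_Q_Qd} for $q_{d0},q_{d1},q_{d2}$ directly in terms of $\dot{F}$ (using $\dot{\alpha}_1=-\alpha_1^{-1}(ge_3-F)^{\top}\dot{F}$), then read $\Omega_d$ off the quaternion kinematics $\dot{Q}_d=\tfrac12\Phi_d Q_d$ with $\Phi_d=\bigl[\begin{smallmatrix}0 & -\Omega_d^{\top}\\ \Omega_d & -[\Omega_d]_{\times}\end{smallmatrix}\bigr]$ (the convention used in Appendix B), with the constraint $\dot{q}_{d3}=0$ supplying $\Omega_{d3}=(\Omega_{d1}q_{d2}-\Omega_{d2}q_{d1})/q_{d0}$; the factor $\alpha_2$ then enters through $q_{d0}^2=\alpha_2/(2\alpha_1)$ exactly as you anticipate. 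Until one of these computations is carried through, the lemma's formula for $\Xi(F)$ remains asserted rather than established.
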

From Lemma \ref{lem:Lemma_Qd} and given the desired unit-quaternion
$Q_{d}=[q_{d0},q_{d}^{\top}]^{\top}\in\mathbb{S}^{3}$, the desired
attitude $R_{d}=\mathcal{R}_{Q_{d}}$ can be obtained from the map
in \eqref{eq:NAV_Append_SO3} as follows:
\[
\mathcal{R}_{Q_{d}}=(q_{d0}^{2}-||q_{d}||^{2})\mathbf{I}_{3}+2q_{d}q_{d}^{\top}-2q_{d0}[q_{d}]_{\times}\in\mathbb{SO}\left(3\right)
\]
It is worth noting that Lemma \ref{lem:Lemma_Qd} indicates that $Q_{d}$
and $\Im$ are singularity-free.
\begin{rem}
	In this work $F$ is designed to be twice differentiable such that
	the desired angular velocity rate of change $\dot{\Omega}_{d}$ can
	be defined as
	\begin{equation}
		\dot{\Omega}_{d}=\dot{\Xi}(F)\dot{F}+\Xi(F)\ddot{F}\label{eq:VTOL_Q_Omd_dot}
	\end{equation}
	$\dot{F}$ and $\ddot{F}$ are provided in the Appendix.
\end{rem}
Let us define the following variables:
\begin{equation}
	\mathcal{E}=\tilde{P}_{c}-\theta,\hspace{2em}\dot{\mathcal{E}}=\tilde{V}_{c}-\dot{\theta}\label{eq:VTOL_Q_E}
\end{equation}
where $\theta\in\mathbb{R}^{3}$ stands for an auxiliary variable.
Consider proposing the following VTOL-UAV control strategy:
\begin{align}
	\mathcal{T}= & k_{c1}\mathbf{vex}(\boldsymbol{\mathcal{P}}_{a}(\tilde{R}_{c}))-k_{c2}(\tilde{R}_{o}\hat{\Omega}-\tilde{R}_{c}\Omega_{d})+J\tilde{R}_{c}\dot{\Omega}_{d}\nonumber \\
	& +[\tilde{R}_{c}\Omega_{d}]_{\times}J\tilde{R}_{c}\Omega_{d}\label{eq:VTOL_Tau}\\
	\ddot{\theta}= & -k_{\theta1}\psi(\theta)-k_{\theta2}\psi(\dot{\theta})+k_{c3}(\hat{P}-P_{d}-\theta)\nonumber \\
	& +k_{c4}(\hat{V}-V_{d}-\dot{\theta})\label{eq:VTOL_Theta_dot}\\
	F= & \ddot{P}_{d}-k_{\theta1}\psi(\theta)-k_{\theta2}\psi(\dot{\theta})\label{eq:VTOL_F}\\
	\Im= & m||ge_{3}-F||\label{eq:VTOL_Thrust}
\end{align}
where $\mathcal{T}$ denotes the torque input, $J$ denotes the inertia
matrix of the vehicle, $\tilde{R}_{o}=R_{y}\hat{R}^{\top}$, $\tilde{R}_{c}=R_{y}R_{d}^{\top}$,
$R_{y}$ represents the reconstructed attitude, $\theta\in\mathbb{R}^{3}$
stands for an auxiliary variable, $F$ denotes the intermediary control
input (selected as in \cite{abdessameud2013position}), $g$ and $m$
are defined in \eqref{eq:VTOL_Translation}, $\Im\in\mathbb{R}$ is
the magnitude of thrust, $\hat{\Omega}$, $\hat{P}$, and $\hat{V}$
denote the estimates of angular velocity, position, and linear velocity,
respectively, $\ddot{P}_{d}$ denotes the second derivative of the
desired position, $\psi(\theta)$ and $\psi(\dot{\theta})$ are the
bounded functions defined in the Appendix, and $k_{\theta1}$, $k_{\theta2}$,
$k_{c1}$, $k_{c2}$, $k_{c3}$, and $k_{c4}$ are strictly positive
constants. 
\begin{thm}
	\label{thm:Theorem2}Consider combining the true VTOL-UAV nonlinear
	dynamics in \eqref{eq:VTOL_Rotation} and \eqref{eq:VTOL_Translation}
	and the nonlinear observer in \eqref{eq:VTOL_ObsvCompact} with the
	control laws in \eqref{eq:VTOL_Tau} and \eqref{eq:VTOL_Thrust}.
	Let Assumption \ref{Assum:P_Om_desired} hold and let $\tilde{R}_{c}(0)\notin\mathcal{U}_{s}$.
	Define the following set:
	\begin{align}
		\mathcal{S}_{c}= & \{(\tilde{R}_{c},\tilde{\Omega}_{c},\tilde{P}_{c},\tilde{V}_{c})\in\mathbb{SO}\left(3\right)\times\mathbb{R}^{3}\times\mathbb{R}^{3}\times\mathbb{R}^{3}\,|\nonumber \\
		& \hspace{6em}\tilde{R}_{c}=\mathbf{I}_{3},\tilde{\Omega}_{c}=\tilde{P}_{c}=\tilde{V}_{c}=0_{3\times1}\}\label{eq:VTOL_Set2}
	\end{align}
	Then, the set $\mathcal{S}_{c}$ is uniformly almost globally exponentially
	stable.
\end{thm}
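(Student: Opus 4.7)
The plan is to mirror the cascaded structure used in Theorem~\ref{thm:Theorem1}: derive the rotational and translational control error dynamics, build two partial Lyapunov functions (one for $(\tilde{R}_c,\tilde{\Omega}_c)$ and one for $(\tilde{P}_c,\tilde{V}_c)$ through the auxiliary coordinate $\mathcal{E}=\tilde{P}_c-\theta$), and then combine them with the observer Lyapunov function $\mathcal{L}_{oT}$ of Theorem~\ref{thm:Theorem1}. The exponential convergence from \eqref{eq:VTOL_Lo_Total-2} is what allows the observer estimates appearing in the control law to be treated as a vanishing exponential perturbation on the nominal (full-state feedback) closed loop.

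First, I would compute the rotational error dynamics. Using $\dot{R}_d=-[\Omega_d]_\times R_d$ and mimicking \eqref{eq:VTOL_Rerr_dot}, one obtains $\dot{\tilde{R}}_c=-[\tilde{\Omega}_c]_\times\tilde{R}_c$, so by \eqref{eq:VTOL_R_Identity2} and Lemma~\ref{Lemm:vex_RI}, $\tfrac{d}{dt}\|\tilde{R}_c\|_{\rm I}=-\tfrac12\mathbf{vex}(\boldsymbol{\mathcal{P}}_a(\tilde{R}_c))^\top\tilde{\Omega}_c$. Substituting \eqref{eq:VTOL_Tau} into $J\dot{\Omega}=[J\Omega]_\times\Omega+\mathcal{T}$ and applying the same algebraic manipulation as in \eqref{eq:VTOL_SOmega} yields
\begin{equation*}
J\dot{\tilde{\Omega}}_c = S(\Omega)\tilde{\Omega}_c-[J\tilde{\Omega}_c]_\times\tilde{\Omega}_c+k_{c1}\mathbf{vex}(\boldsymbol{\mathcal{P}}_a(\tilde{R}_c))-k_{c2}\tilde{\Omega}_c+\varepsilon_R,
\end{equation*}
where $\varepsilon_R$ collects the discrepancy $k_{c2}(\tilde{R}_o\hat{\Omega}-\Omega)$ and related terms, all of which vanish exponentially by Theorem~\ref{thm:Theorem1}. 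Following \eqref{eq:VTOL_LyapC1}--\eqref{eq:VTOL_Lyap_Lo1Final} verbatim (with $k_{o1}\to k_{c2}$, $\gamma_o\to k_{c1}$), the candidate
$\mathcal{L}_{c1}=2\|\tilde{R}_c\|_{\rm I}+\tfrac{1}{2k_{c1}}\tilde{\Omega}_c^\top J\tilde{\Omega}_c+\tfrac{1}{2\delta_{c1}}\mathbf{vex}(\boldsymbol{\mathcal{P}}_a(\tilde{R}_c))^\top\tilde{\Omega}_c$
is sandwiched between two positive definite quadratic forms in $[\sqrt{\|\tilde{R}_c\|_{\rm I}},\|\tilde{\Omega}_c\|]^\top$ for $\delta_{c1}$ sufficiently large, and its derivative is bounded by $-\underline{\lambda}_{A_{c1}}(\|\tilde{R}_c\|_{\rm I}+\|\tilde{\Omega}_c\|^2)$ plus an exponentially decaying cross term coming from $\varepsilon_R$.

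For the translational part, I would rewrite $\dot{V}=F-\tfrac{\Im}{m}(R^\top-R_d^\top)e_3$ using \eqref{eq:VTOL_F_1}, so that $\dot{\tilde{V}}_c=F-\ddot{P}_d-\tfrac{\Im}{m}(R^\top-R_d^\top)e_3$. Substituting \eqref{eq:VTOL_F}, the nominal part reduces to $-k_{\theta1}\psi(\theta)-k_{\theta2}\psi(\dot{\theta})$, and combining with \eqref{eq:VTOL_Theta_dot} gives
\begin{equation*}
\ddot{\mathcal{E}}=-k_{c3}\mathcal{E}-k_{c4}\dot{\mathcal{E}}-\tfrac{\Im}{m}(R^\top-R_d^\top)e_3+\varepsilon_P,
\end{equation*}
where $\varepsilon_P$ gathers the observer-induced perturbation $k_{c3}(\hat{P}-P)+k_{c4}(\hat{V}-V)$. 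The rotation-induced term $(R^\top-R_d^\top)e_3=(\tilde{R}_c^\top-\mathbf{I}_3)R_d^\top e_3$ is quadratic in $\sqrt{\|\tilde{R}_c\|_{\rm I}}$ by Lemma~\ref{Lemm:vex_RI} and is therefore dominated by $\mathcal{L}_{c1}$. Along the analogue of \eqref{eq:VTOL_Lyap_Lo2}, the candidate $\mathcal{L}_{c2}=\tfrac12\mathcal{E}^\top\mathcal{E}+\tfrac{1}{2k_{c4}}\dot{\mathcal{E}}^\top\dot{\mathcal{E}}-\delta_{c2}\mathcal{E}^\top\dot{\mathcal{E}}$ yields $\dot{\mathcal{L}}_{c2}\leq-\underline{\lambda}_{A_{c2}}(\|\mathcal{E}\|^2+\|\dot{\mathcal{E}}\|^2)$ plus a cross coupling with $\|\tilde{R}_c\|_{\rm I}$ and an exponentially vanishing perturbation. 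Finally, since $\psi$ is bounded and Assumption~\ref{Assum:P_Om_desired} guarantees bounded desired trajectories, one recovers $\tilde{P}_c=\mathcal{E}+\theta\to 0$ and $\tilde{V}_c\to 0$ once $\mathcal{E},\dot{\mathcal{E}},\theta,\dot{\theta}\to 0$.

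The main obstacle is handling simultaneously three sources of coupling in a single composite Lyapunov function $\mathcal{L}_T=\mathcal{L}_{oT}+\mathcal{L}_{c1}+\mathcal{L}_{c2}$: (i) the observer errors $(\tilde{R}_o,\tilde{\Omega}_o,\tilde{P}_o,\tilde{V}_o)$ entering the control law through the surrogate signals $\tilde{R}_o\hat{\Omega}$, $\hat{P}$, and $\hat{V}$; (ii) the attitude--translation coupling via the thrust term $\tfrac{\Im}{m}(R^\top-R_d^\top)e_3$ scaled by the (bounded) thrust magnitude $\Im=m\|ge_3-F\|$, which must be kept bounded uniformly thanks to Assumption~\ref{Assum:P_Om_desired} and the boundedness of $\psi$; and (iii) the $F\neq[0,0,a]^\top$, $a\geq g$ condition of Lemma~\ref{lem:Lemma_Qd} which needs to hold along trajectories so that $R_d$ and $\Omega_d$ from \eqref{eq:VTOL_Q_Qd}--\eqref{eq:VTOL_Q_Omd} remain well defined. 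After choosing the gains and the small constants $\delta_{c1},\delta_{c2}$ so that all quadratic forms are positive definite, the aggregated inequality takes the form $\dot{\mathcal{L}}_T\leq-\underline{\lambda}_{A}\,\mathcal{L}_T$, reproducing the exponential-decay bound of \eqref{eq:VTOL_Lo_Total-2} for the complete state. The exclusion $\tilde{R}_c(0)\notin\mathcal{U}_s$ (together with $\tilde{R}_o(0)\notin\mathcal{U}_s$ inherited from Theorem~\ref{thm:Theorem1}) accounts for the three antipodal repeller configurations in $\mathbb{SO}(3)$, thereby giving almost global exponential convergence to the set $\mathcal{S}_c$.
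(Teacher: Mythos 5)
Your proposal follows essentially the same route as the paper: the same error dynamics, the same cross-term-augmented Lyapunov functions $\mathcal{L}_{c1}$ and $\mathcal{L}_{c2}$ in the coordinates $(\tilde{R}_{c},\tilde{\Omega}_{c})$ and $(\mathcal{E},\dot{\mathcal{E}})$, the same composite function $\mathcal{L}_{T}=\mathcal{L}_{oT}+\mathcal{L}_{c1}+\mathcal{L}_{c2}$ that absorbs the observer errors as cross couplings in one large quadratic form, and the same closing argument that $\mathcal{E},\dot{\mathcal{E}}\rightarrow0$ forces $\theta,\dot{\theta}\rightarrow0$ and hence $\tilde{P}_{c},\tilde{V}_{c}\rightarrow0$. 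The only slip is cosmetic: the thrust-induced coupling $\|ge_{3}-F\|(R^{\top}-R_{d}^{\top})e_{3}$ is bounded \emph{linearly} in $\sqrt{\|\tilde{R}_{c}\|_{{\rm I}}}$ (via $\|\mathbf{I}_{3}-\tilde{R}_{c}\|_{F}=2\sqrt{2}\sqrt{\|\tilde{R}_{c}\|_{{\rm I}}}$), not quadratically, which is exactly what produces the cross term $\|\mathcal{E}\|\sqrt{\|\tilde{R}_{c}\|_{{\rm I}}}$ that the composite quadratic form must dominate.
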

\begin{proof}The statement in Theorem \ref{thm:Theorem2} entails
	that the closed loop error signals converge exponentially to the equilibrium
	point starting from any initial condition except for the three repeller
	attitude cases given in Definition \ref{def:Unstable-set} ($\tilde{R}_{c}(0)\notin\mathcal{U}_{s}$).
	Consider the attitude error $\tilde{R}_{c}=RR_{d}^{\top}$ in \eqref{eq:VTOL_Rerr-c}.
	Using \eqref{eq:VTOL_Rotation} and \eqref{eq:VTOL_Rd_dot}, one shows
	that
	\begin{align}
		\dot{\tilde{R}}_{c} & =R\dot{R}_{d}^{\top}+\dot{R}R_{d}^{\top}=\tilde{R}_{c}[\Omega_{d}]_{\times}-[\Omega]_{\times}\tilde{R}_{c}\nonumber \\
		& =-[\Omega-\tilde{R}_{c}\Omega_{d}]_{\times}\tilde{R}_{c}=-[\tilde{\Omega}_{c}]_{\times}\tilde{R}_{c}\label{eq:VTOL_Rcerr_dot}
	\end{align}
	where the identity in \eqref{eq:VTOL_R_Identity1} was used, and $\tilde{\Omega}_{c}$
	is defined in \eqref{eq:VTOL_Omerr-c}. In view of \eqref{eq:VTOL_Ecul_Dist},
	\eqref{eq:VTOL_R_Identity2}, and \eqref{eq:VTOL_Rcerr_dot}, one
	finds \cite{hashim2019SO3Wiley}
	\begin{align}
		||\dot{\tilde{R}}_{c}||_{{\rm I}}= & -\frac{1}{4}{\rm Tr}\{\dot{\tilde{R}}_{c}\}=\frac{1}{4}{\rm Tr}\{[\tilde{\Omega}_{c}]_{\times}\boldsymbol{\mathcal{P}}_{a}(\tilde{R}_{c})\}\nonumber \\
		= & -\frac{1}{2}\mathbf{vex}(\boldsymbol{\mathcal{P}}_{a}(\tilde{R}_{c}))^{\top}\tilde{\Omega}_{c}\label{eq:VTOL_RerrIc_dot}
	\end{align}
	In view of \eqref{eq:VTOL_Rotation}, \eqref{eq:VTOL_Omerr-c}, and
	\eqref{eq:VTOL_Rcerr_dot}, one has
	\begin{align}
		J\dot{\tilde{\Omega}}_{c}= & J\dot{\Omega}-J\dot{\tilde{R}}_{c}\Omega_{d}-J\tilde{R}_{c}\dot{\Omega}_{d}\nonumber \\
		= & [J\Omega]_{\times}\Omega+\mathcal{T}+J[\tilde{\Omega}_{c}]_{\times}\tilde{R}_{c}\Omega_{d}-J\tilde{R}_{c}\dot{\Omega}_{d}\nonumber \\
		= & S(\Omega)\tilde{\Omega}_{c}-[J\tilde{\Omega}_{c}]_{\times}\tilde{\Omega}_{c}+\mathcal{T}\nonumber \\
		& +[J\tilde{R}_{c}\Omega_{d}]_{\times}\tilde{R}_{c}\Omega_{d}-J\tilde{R}_{c}\dot{\Omega}_{d}\label{eq:VTOL_OmerrC_dot}
	\end{align}
	with
	\begin{align}
		& \left[J\Omega\right]_{\times}\Omega+J[\tilde{\Omega}_{c}]_{\times}\tilde{R}_{c}\Omega_{d}-[J\tilde{R}_{c}\Omega_{d}]_{\times}\tilde{R}_{c}\Omega_{d}\nonumber \\
		& =\left[J\Omega\right]_{\times}\Omega+(J[\tilde{\Omega}_{c}]_{\times}-[J\Omega]_{\times}+[J\tilde{\Omega}_{c}]_{\times})\tilde{R}_{c}\Omega_{d}\nonumber \\
		& =\left[J\Omega\right]_{\times}\tilde{\Omega}_{c}-J\left[\Omega\right]_{\times}\tilde{\Omega}_{c}-\left[\Omega\right]_{\times}J\tilde{\Omega}_{c}-[J\tilde{\Omega}_{c}]_{\times}\tilde{\Omega}_{c}\nonumber \\
		& =S(\Omega)\tilde{\Omega}_{c}-[J\tilde{\Omega}_{c}]_{\times}\tilde{\Omega}_{c}\label{eq:VTOL_SOmegaC}
	\end{align}
	using the fact that $[\Omega]_{\times}\tilde{\Omega}_{c}=-[\tilde{\Omega}_{c}]_{\times}\Omega$
	and $S(\Omega)=\left[J\Omega\right]_{\times}-J\left[\Omega\right]_{\times}-\left[\Omega\right]_{\times}J\in\mathfrak{so}\left(3\right)$.
	From \eqref{eq:VTOL_Perr-c} and \eqref{eq:VTOL_Translation}, one
	has
	\begin{align}
		\dot{\tilde{P}}_{c} & =\tilde{V}_{c}\label{eq:VTOL_PerrC_dot}
	\end{align}
	where $\dot{P}_{d}=V_{d}$. In the same spirit, from \eqref{eq:VTOL_Verr-c},
	\eqref{eq:VTOL_Translation}, and \eqref{eq:VTOL_F_1}, one shows
	that
	\begin{align}
		\dot{\tilde{V}}_{c} & =F-||ge_{3}-F||(R^{\top}-R_{d}^{\top})e_{3}-\ddot{P}_{d}\label{eq:VTOL_VerrC_dot}
	\end{align}
	Define the real-valued function $L_{2}:\mathbb{SO}\left(3\right)\times\mathbb{R}^{3}\rightarrow\mathbb{R}_{+}$
	\begin{equation}
		L_{2}=2||\tilde{R}_{c}||_{{\rm I}}+\frac{1}{k_{c1}}\tilde{\Omega}_{c}^{\top}J\tilde{\Omega}_{c}\label{eq:VTOL_LyapC2}
	\end{equation}
	Considering \eqref{eq:VTOL_Rerr-c}, \eqref{eq:VTOL_Omerr-c}, and
	\eqref{eq:VTOL_Tau}, the derivative of \eqref{eq:VTOL_LyapC2} is
	\begin{align}
		\dot{L}_{2}= & -\frac{1}{2}\mathbf{vex}(\boldsymbol{\mathcal{P}}_{a}(\tilde{R}_{c}))^{\top}\tilde{\Omega}_{c}\nonumber \\
		& +\frac{1}{k_{c3}}\tilde{\Omega}_{c}^{\top}(\mathcal{T}+[J\tilde{R}_{c}\Omega_{d}]_{\times}\tilde{R}_{c}\Omega_{d}-J\tilde{R}_{c}\dot{\Omega}_{d})\nonumber \\
		\leq & -\frac{k_{c2}}{k_{c1}}||\tilde{\Omega}_{c}||^{2}+\frac{k_{c2}}{k_{c1}}||\tilde{\Omega}_{c}||\,||\tilde{\Omega}_{o}||\label{eq:VTOL_LyapC2_dot}
	\end{align}
	with $[\tilde{\Omega}_{c}]_{\times}\tilde{\Omega}_{c}=0_{3\times1}$.
	It follows from \eqref{eq:VTOL_LyapLodot_Total} that $\tilde{\Omega}_{o}$
	is bounded and converges to zero. Hence, $\tilde{\Omega}_{c}$ is
	bounded and $\lim_{t\rightarrow\infty}||\tilde{\Omega}_{c}||=0$.
	Given that $\tilde{\Omega}_{o}$, $\tilde{\Omega}_{c}$, $\Omega_{d}$,
	$\dot{\Omega}_{d}$, and $\dot{L}_{2}$ are bounded, consider the
	following derivative of the vex operator \cite{hashim2019AtiitudeSurvey}:
	\begin{align}
		\mathbf{vex}(\boldsymbol{\mathcal{P}}_{a}(\dot{\tilde{R}}_{c})) & =-\frac{1}{2}\Psi(\tilde{R}_{c})\tilde{\Omega}_{c}\label{eq:VTOL_vex_dot-1}
	\end{align}
	where $\Psi(\tilde{R}_{c})={\rm Tr}\{\tilde{R}_{c}\}\mathbf{I}_{3}-\tilde{R}_{c}$.
	Now, let us find the derivative
	\begin{align}
		& -\frac{1}{2\delta_{c1}}\frac{d}{dt}\mathbf{vex}(\boldsymbol{\mathcal{P}}_{a}(\tilde{R}_{c}))^{\top}\tilde{\Omega}_{c}\nonumber \\
		& \leq-\frac{k_{c1}c_{c2}}{2\delta_{c1}}||\tilde{R}_{c}||_{{\rm I}}+\frac{c_{c3}||\tilde{\Omega}_{c}||+c_{c4}||\tilde{\Omega}_{o}||}{2\delta_{c1}}\sqrt{||\tilde{R}_{c}||_{{\rm I}}}\label{eq:VTOL_AUX2}
	\end{align}
	where $\eta_{\Omega_{c}}=\sup_{t\geq0}||J\tilde{\Omega}_{c}||$, $\eta_{\Omega}=\sup_{t\geq0}S(\Omega)$,
	$c_{c1}=\sqrt{1-||\tilde{R}_{c}(0)||_{{\rm I}}}$, $c_{c2}=\frac{c_{c1}^{2}}{\overline{\lambda}_{J}}$,
	$c_{c3}=\frac{(\eta_{\Omega}+k_{c2})c_{c1}+(1+\underline{\lambda}_{J})\eta_{\Omega_{c}}}{\underline{\lambda}_{J}}$,
	and $c_{c4}=\frac{k_{c2}c_{c1}}{\underline{\lambda}_{J}}$. Based
	on \eqref{eq:VTOL_LyapC2} and \eqref{eq:VTOL_AUX2}, consider the
	following Lyapunov function candidate $\mathcal{L}_{c1}:\mathbb{SO}\left(3\right)\times\mathbb{R}^{3}\rightarrow\mathbb{R}_{+}$:
	\begin{equation}
		\mathcal{L}_{c1}=2||\tilde{R}_{c}||_{{\rm I}}+\frac{1}{2k_{c1}}\tilde{\Omega}_{c}^{\top}J\tilde{\Omega}_{c}-\frac{1}{2\delta_{c1}}\mathbf{vex}(\boldsymbol{\mathcal{P}}_{a}(\tilde{R}_{c}))^{\top}\tilde{\Omega}_{c}\label{eq:VTOL_LyapLc2}
	\end{equation}
	such that{\small
		\[
		e_{c1}^{\top}\underbrace{\left[\begin{array}{cc}
				2 & -\frac{\overline{\lambda}_{J}c_{c1}}{4\delta_{c1}}\\
				-\frac{\overline{\lambda}_{J}c_{c1}}{4\delta_{c1}} & \frac{\underline{\lambda}_{J}}{2k_{c1}}
			\end{array}\right]}_{M_{5}}e_{c1}\leq\mathcal{L}_{c1}\leq e_{c1}^{\top}\underbrace{\left[\begin{array}{cc}
				2 & \frac{\overline{\lambda}_{J}c_{c1}}{4\delta_{c1}}\\
				\frac{\overline{\lambda}_{J}c_{c1}}{4\delta_{c1}} & \frac{\underline{\lambda}_{J}}{2k_{c1}}
			\end{array}\right]}_{M_{6}}e_{c1}
		\]
	}where $e_{c1}=[\sqrt{||\tilde{R}_{c}||_{{\rm I}}},||\tilde{\Omega}_{c}||]^{\top}$.
	$M_{5}$ and $M_{6}$ are positive if $\delta_{c1}>\frac{\overline{\lambda}_{J}c_{c1}}{4}\sqrt{\frac{k_{c1}}{\underline{\lambda}_{J}}}$.
	Based on \eqref{eq:VTOL_LyapLc2}, \eqref{eq:VTOL_LyapC2_dot}, and
	\eqref{eq:VTOL_AUX2}, one finds
	\begin{align}
		\dot{\mathcal{L}}_{c1}\leq & -e_{c1}^{\top}\underbrace{\left[\begin{array}{cc}
				\frac{k_{c1}c_{c2}}{2\delta_{c1}} & \frac{c_{c3}}{4\delta_{c1}}\\
				\frac{c_{c3}}{4\delta_{c1}} & \frac{k_{c2}}{k_{c1}}
			\end{array}\right]}_{A_{c1}}e_{c1}+\frac{c_{c4}}{2\delta_{c1}}||\tilde{\Omega}_{o}||\sqrt{||\tilde{R}_{c}||_{{\rm I}}}\nonumber \\
		& +\frac{k_{c2}}{k_{c1}}||\tilde{\Omega}_{o}||\,||\tilde{\Omega}_{c}||\label{eq:VTOL_LyapLc2dot}
	\end{align}
	$A_{c1}$ is positive if $\delta_{c1}>\frac{c_{c3}^{2}}{8k_{c2}c_{c2}}$.
	Let us set $\delta_{c1}>\max\{\frac{\overline{\lambda}_{J}c_{c1}}{4}\sqrt{\frac{k_{c1}}{\underline{\lambda}_{J}}},\frac{c_{c3}^{2}}{8k_{c2}c_{c2}}\}$
	with $\underline{\lambda}_{A_{c1}}$ being the minimum eigenvalue
	of $A_{c1}$. Recalling \eqref{eq:VTOL_Q_E} and \eqref{eq:VTOL_VerrC_dot},
	one shows
	\begin{equation}
		\begin{cases}
			\dot{\mathcal{E}} & =\tilde{V}_{c}-\dot{\theta}\\
			\ddot{\mathcal{E}} & =F-||ge_{3}-F||(R^{\top}-R_{d}^{\top})e_{3}-\ddot{P}_{d}-\ddot{\theta}
		\end{cases}\label{eq:VTOL_Edot}
	\end{equation}
	In view of \eqref{eq:VTOL_Theta_dot} and \eqref{eq:VTOL_F}, it becomes
	apparent that $\ddot{\theta}$ and $F$ are bounded indicating that
	$\Im$ is bounded. Also, note that $||\mathbf{I}_{3}-\tilde{R}_{c}||_{F}=2\sqrt{2}\sqrt{||\tilde{R}_{c}||_{{\rm I}}}$
	such that $||ge_{3}-F||(R^{\top}-R_{d}^{\top})e_{3}=||ge_{3}-F||(\mathbf{I}_{3}-\tilde{R}_{c})R^{\top}e_{3}\leq4(||ge_{3}||+\sup_{t\geq0}\{||\ddot{P}_{d}||\}+(k_{\theta1}+k_{\theta2})\sqrt{||\tilde{R}_{c}||_{{\rm I}}}\triangleq4\pi\,\sqrt{||\tilde{R}_{c}||_{{\rm I}}}$
	\begin{align}
		||ge_{3}-F||(R^{\top}-R_{d}^{\top})e_{3}\leq & 4\pi\,\sqrt{||\tilde{R}_{c}||_{{\rm I}}}\label{eq:VTOL_RHO}
	\end{align}
	where $\pi$ is an upper bounded positive constant. From \eqref{eq:VTOL_Q_E}
	and \eqref{eq:VTOL_Edot}, define the following Lyapunov function
	candidate:
	\begin{equation}
		\mathcal{L}_{c2}=\frac{1}{2}\mathcal{E}^{\top}\mathcal{E}+\frac{1}{2k_{c1}}\dot{\mathcal{E}}^{\top}\dot{\mathcal{E}}+\frac{1}{\delta_{c2}}\mathcal{E}^{\top}\dot{\mathcal{E}}\label{eq:VTOL_LyapL1C}
	\end{equation}
	such that
		\[
		e_{c2}^{\top}\underbrace{\left[\begin{array}{cc}
				\frac{1}{2} & \frac{-1}{2\delta_{c2}}\\
				\frac{-1}{2\delta_{c2}} & \frac{1}{2k_{c3}}
			\end{array}\right]}_{M_{7}}e_{c2}\leq\mathcal{L}_{c2}\leq e_{c2}^{\top}\underbrace{\left[\begin{array}{cc}
				\frac{1}{2} & \frac{1}{2\delta_{c2}}\\
				\frac{1}{2\delta_{c2}} & \frac{1}{2k_{c3}}
			\end{array}\right]}_{M_{8}}e_{c2}
		\]
	where $e_{c2}=[||\mathcal{E}||,||\dot{\mathcal{E}}||]^{\top}$. $M_{7}$
	and $M_{8}$ are made positive by selecting $\delta_{c2}>\sqrt{k_{c3}}$.
	Using \eqref{eq:VTOL_Edot}, \eqref{eq:VTOL_F}, \eqref{eq:VTOL_Theta_dot},
	and \eqref{eq:VTOL_RHO}, one obtains
	\begin{align}
		& \dot{\mathcal{L}}_{c2}\leq-e_{c2}^{\top}\underbrace{\left[\begin{array}{cc}
				\frac{k_{c3}}{\delta_{c2}} & \frac{k_{c4}}{2\delta_{c2}}\\
				\frac{k_{c4}}{2\delta_{c2}} & \frac{k_{c4}}{k_{c3}}-\frac{1}{\delta_{c2}}
			\end{array}\right]}_{A_{c2}}e_{c2}+(||\dot{\mathcal{E}}||+\frac{k_{c3}}{\delta_{c2}}||\mathcal{E}||)||\tilde{P}_{o}||\nonumber \\
		& +(\frac{k_{c4}}{k_{c3}}||\dot{\mathcal{E}}||+\frac{k_{c4}}{\delta_{c2}}||\mathcal{E}||)||\tilde{V}_{o}||+4\pi(||\dot{\mathcal{E}}||+\frac{1}{\delta_{c2}}||\mathcal{E}||)\sqrt{||\tilde{R}_{c}||_{{\rm I}}}\label{eq:VTOL_LyapL1Cdot}
	\end{align}
	Note that $\hat{P}-P_{d}-\theta=-\tilde{P}_{o}+\mathcal{E}$ and $\hat{V}-V_{d}-\dot{\theta}=-\tilde{V}_{o}+\dot{\mathcal{E}}$.
	$A_{c2}$ is made positive by selecting $\delta_{c2}>\frac{k_{c4}^{2}+4k_{c3}}{4k_{c4}}$.
	Consider selecting $\delta_{c2}>\max\{\sqrt{k_{c3}},\frac{k_{c4}^{2}+4k_{c3}}{4k_{c4}}\}$.
	Based on \eqref{eq:VTOL_LyapLodot_Total}, $\tilde{P}_{o}$ and $\tilde{V}_{o}$
	are bounded and converge to zero, while from \eqref{eq:VTOL_LyapC2_dot},
	$\sqrt{||\tilde{R}_{c}||_{{\rm I}}}$ is bounded. Therefore, $\mathcal{L}_{c2}$
	is bounded. Recall \eqref{eq:VTOL_LyapL1C}, \eqref{eq:VTOL_LyapLc2},
	and define the following Lyapunov function candidate $\mathcal{L}_{cT}:\mathbb{SO}\left(3\right)\times\mathbb{R}^{3}\times\mathbb{R}^{3}\times\mathbb{R}^{3}\rightarrow\mathbb{R}_{+}$:
	\begin{equation}
		\mathcal{L}_{cT}=\mathcal{L}_{c1}+\mathcal{L}_{c2}\label{eq:VTOL_Q_LyapLc-Final}
	\end{equation}
	From \eqref{eq:VTOL_LyapL1Cdot} and \eqref{eq:VTOL_LyapLc2dot},
	one obtains
	\begin{align}
		\dot{\mathcal{L}}_{cT}\leq & -e_{c3}^{\top}\underbrace{\left[\begin{array}{cc}
				\underline{\lambda}_{A_{c1}} & -c_{e}\\
				-c_{e} & \underline{\lambda}_{A_{c2}}
			\end{array}\right]}_{A_{c}}e_{c3}+(||\dot{\mathcal{E}}||+\frac{k_{c3}}{\delta_{c2}}||\mathcal{E}||)||\tilde{P}_{o}||\nonumber \\
		& +(\frac{k_{c2}}{k_{c1}}||\tilde{\Omega}_{c}||+\frac{c_{c4}}{2\delta_{c1}}\sqrt{||\tilde{R}_{c}||_{{\rm I}}})||\tilde{\Omega}_{o}||\nonumber \\
		& +(\frac{k_{c4}}{k_{c3}}||\dot{\mathcal{E}}||+\frac{k_{c4}}{\delta_{c2}}||\mathcal{E}||)||\tilde{V}_{o}||\label{eq:VTOL_Q_LyapLcdot-Final}
	\end{align}
	where $c_{e}=\max\{2\pi,\frac{2\pi}{\delta_{c2}}\}$ and $e_{c3}=[||e_{c1}||,||e_{c2}||]^{\top}$.
	For a positive definite $A_{c}$, select $\underline{\lambda}_{A_{c1}}>c_{e}^{2}/\underline{\lambda}_{A_{c2}}$.
	Let $\underline{\lambda}_{A_{c}}$ denote the minimum eigenvalue of
	$A_{c}$. Using \eqref{eq:VTOL_LyapLo_Total} and \eqref{eq:VTOL_Q_LyapLc-Final},
	define the following Lyapunov function candidate: 
	\begin{equation}
		\mathcal{L}_{T}=\mathcal{L}_{oT}+\mathcal{L}_{cT}\label{eq:VTOL_Q_LyapL-Final}
	\end{equation}
	Thus, from \eqref{eq:VTOL_LyapLodot_Total} and \eqref{eq:VTOL_Q_LyapLcdot-Final},
	one finds
	\begin{align}
		\dot{\mathcal{L}}_{T}\leq & -\underline{\lambda}_{A_{o}}||\tilde{R}_{o}||_{{\rm I}}-e_{1}^{\top}\underbrace{\left[\begin{array}{ccc}
				\underline{\lambda}_{A_{o}} & \frac{c_{c4}}{4\delta_{c1}} & \frac{k_{c2}}{2k_{c1}}\\
				\frac{c_{c4}}{4\delta_{c1}} & \underline{\lambda}_{A_{c}} & 0\\
				\frac{k_{c2}}{2k_{c1}} & 0 & \underline{\lambda}_{A_{c}}
			\end{array}\right]}_{A_{1}}e_{1}\nonumber \\
		& -e_{2}^{\top}\underbrace{\left[\begin{array}{cc}
				\underline{\lambda}_{A_{o}}\mathbf{I}_{2} & \frac{c_{p}}{2}\mathbf{I}_{2}\\
				\frac{c_{p}}{2}\mathbf{I}_{2} & \underline{\lambda}_{A_{c}}\mathbf{I}_{2}
			\end{array}\right]}_{A_{2}}e_{2}\label{eq:VTOL_Q_LyapL-Final-dot}
	\end{align}
	where $e_{1}=[||\tilde{\Omega}_{o}||,\sqrt{||\tilde{R}_{c}||_{{\rm I}}},||\tilde{\Omega}_{c}||]^{\top}$,
	$e_{2}=[||\tilde{P}_{o}||,||\tilde{V}_{o}||,||\mathcal{E}||,||\dot{\mathcal{E}}||]^{\top}$,
	and $c_{p}=\max\{1,\frac{k_{c3}}{\delta_{c2}},\frac{k_{c4}}{k_{c3}},\frac{k_{c4}}{\delta_{c2}}\}$.
	$A_{1}$ is positive if $\underline{\lambda}_{A_{o}}>\frac{4\delta_{c1}^{2}k_{c2}^{2}+k_{c1}^{2}c_{c4}^{2}}{16\underline{\lambda}_{A_{c}}k_{c1}^{2}\delta_{c1}^{2}}$,
	and $A_{2}$ is positive if $\underline{\lambda}_{A_{o}}>\frac{c_{p}^{2}}{4\underline{\lambda}_{A_{c}}}$.
	Thus, consider selecting $\underline{\lambda}_{A_{o}}>\max\{\frac{c_{p}^{2}}{4\underline{\lambda}_{A_{c}}},\frac{4\delta_{c1}^{2}k_{c2}^{2}+k_{c1}^{2}c_{c4}^{2}}{16\underline{\lambda}_{A_{c}}k_{c1}^{2}\delta_{c1}^{2}}\}$.
	Let $\underline{\lambda}_{A_{1}}$ and $\underline{\lambda}_{A_{2}}$
	denote the minimum eigenvalue of $A_{1}$ and $A_{2}$, respectively.
	By defining $\underline{\lambda}_{A}=\min\{\underline{\lambda}_{A_{1}},\underline{\lambda}_{A_{2}},\underline{\lambda}_{A_{o}}\}$
	and $\overline{\lambda}_{M}=\max\{\overline{\lambda}(M_{1}),\overline{\lambda}(M_{2}),\ldots,\overline{\lambda}(M_{8})\}$,
	one finds
	\begin{align}
		\dot{\mathcal{L}}_{T}\leq & -(\underline{\lambda}_{A}/\overline{\lambda}_{M})\mathcal{L}_{T}\nonumber \\
		\mathcal{L}_{T}(t)\leq & \mathcal{L}_{T}(0)\exp(-t\underline{\lambda}_{A}/\overline{\lambda}_{M}),\hspace{1em}\forall t\geq0\label{eq:VTOL_Q_LyapL-dot-Total}
	\end{align}
	such that $\lim_{t\rightarrow\infty}\tilde{R}_{o}=\lim_{t\rightarrow\infty}\tilde{R}_{c}=\mathbf{I}_{3}$,
	$\lim_{t\rightarrow\infty}||\tilde{\Omega}_{o}||=\lim_{t\rightarrow\infty}||\tilde{P}_{o}||=\lim_{t\rightarrow\infty}||\tilde{V}_{o}||=0$,
	and $\lim_{t\rightarrow\infty}||\tilde{\Omega}_{c}||=\lim_{t\rightarrow\infty}||\mathcal{E}||=\lim_{t\rightarrow\infty}||\dot{\mathcal{E}}||=0$.
	From \eqref{eq:VTOL_Q_LyapL-dot-Total}, the definition of $\ddot{\theta}$
	in \eqref{eq:VTOL_Theta_dot} implies that $\ddot{\theta}\rightarrow-k_{\theta1}\psi(\theta)-k_{\theta2}\psi(\dot{\theta})$
	as $\mathcal{E},\dot{\mathcal{E}}\rightarrow0$ and, in turn, $||\psi(\theta)||$
	and $||\psi(\dot{\theta})||$ become strictly decreasing with $\psi(\theta),\psi(\dot{\theta})\rightarrow0$
	which shows that $\lim_{t\rightarrow\infty}\theta=\lim_{t\rightarrow\infty}\dot{\theta}=0$.
	Therefore, $\lim_{t\rightarrow\infty}\tilde{R}_{o}=\lim_{t\rightarrow\infty}\tilde{R}_{c}=\mathbf{I}_{3}$,
	$\lim_{t\rightarrow\infty}||\tilde{\Omega}_{o}||=\lim_{t\rightarrow\infty}||\tilde{P}_{o}||=\lim_{t\rightarrow\infty}||\tilde{V}_{o}||=0$,
	$\lim_{t\rightarrow\infty}||\tilde{\Omega}_{c}||=\lim_{t\rightarrow\infty}||\tilde{P}_{c}||=\lim_{t\rightarrow\infty}||\tilde{V}_{c}||=0$,
	and the closed loop error signals of the observer-based controller
	design are uniformly almost globally exponentially stable proving
	Theorem \ref{thm:Theorem2}.\end{proof}

\section{Implementation Steps \label{sec:VTOL_Implementation}}

The VTOL-UAV observer-based controller on the Lie Group is presented
in a discrete form to facilitate the implementation process. Define
$\Delta t$ as a small sample time step. The implementation steps
are as follows:

\textcolor{blue}{\textbf{Step 1.}} Select $\hat{\Omega}_{1},\hat{P}_{0},\hat{V}_{0},\theta_{0},\dot{\theta}_{0}\in\mathbb{R}^{3}$,
$\hat{R}_{0}\in\mathbb{SO}(3)$, formulate the navigation matrix $\hat{X}_{0}=\left[\begin{array}{ccc}
	\hat{R}_{0}^{\top} & \hat{P}_{0} & \hat{V}_{0}\\
	0_{1\times3} & 1 & 0\\
	0_{1\times3} & 0 & 1
\end{array}\right]$, and define $k=1$.\vspace{0.2cm}

\textcolor{blue}{\textbf{Step 2.}} (Pose reconstruction) Use one of the methods of pose
reconstruction to obtain reconstructed attitude $R_{y|k}$ and position
$P_{y|k}$. For more details consult \cite{hashim2019SO3Wiley, hashim2020SE3Stochastic}.\vspace{0.2cm}

\textcolor{blue}{\textbf{Step 3.}} (Pose estimation error) Evaluate the attitude error
as $\tilde{R}_{o|k}=R_{y}\hat{R}_{k-1}^{\top}$ and the position error
as $\tilde{P}_{o|k}=P_{y|k}-\hat{P}_{k-1}$.\vspace{0.2cm}

\textcolor{blue}{\textbf{Step 4.}} (Thrust) Obtain $F_{k}$ and $\Im_{k}$ as in \eqref{eq:VTOL_F}
and \eqref{eq:VTOL_Thrust}
\begin{align*}
	F_{k} & =\ddot{P}_{d}-k_{\theta1}\psi(\theta_{k-1})-k_{\theta2}\psi(\dot{\theta}_{k-1})=[f_{1},f_{2},f_{3}]^{\top}\\
	\Im_{k} & =m||ge_{3}-F_{k}||
\end{align*}
with $\ddot{\theta}_{k}=-k_{\theta1}\psi(\theta_{k-1})-k_{\theta2}\psi(\dot{\theta}_{k-1})+k_{c3}(\hat{P}_{k-1}-P_{d|k}-\theta_{k-1})+k_{c4}(\hat{V}_{k-1}-V_{d|k}-\dot{\theta}_{k-1})$,
$\dot{\theta}_{k}=\dot{\theta}_{k-1}+\Delta t\ddot{\theta}_{k}$,
and $\theta_{k}=\theta_{k-1}+\Delta t\dot{\theta}_{k}$.\vspace{0.2cm}

\textcolor{blue}{\textbf{Step 5.}} (Prediction) $\hat{U}_{k}=\left[\begin{array}{ccc}
	[\hat{\Omega}_{k-1}\text{\ensuremath{]_{\times}}} & 0_{3\times1} & -\frac{\Im_{k}}{m}e_{3}\\
	0_{1\times3} & 0 & 0\\
	0_{1\times3} & 1 & 0
\end{array}\right]$ where $\hat{U}_{k}\in\mathcal{U}_{\mathcal{M}}$ and
\[
\hat{X}_{k|k-1}=\hat{X}_{k-1}\exp(\hat{U}_{k}\Delta t)
\]
$\exp(\cdot)$ denotes exponential of a matrix.\vspace{0.2cm}

\textcolor{blue}{\textbf{Step 6.}} (Correction factors) Evaluate the correction factors
as $w_{o}=-\gamma_{o}\tilde{R}_{o|k}^{\top}\mathbf{vex}(\boldsymbol{\mathcal{P}}_{a}(\tilde{R}_{o|k}))$, 

$w_{\Omega}=k_{o1}R_{y|k}^{\top}\mathbf{vex}(\boldsymbol{\mathcal{P}}_{a}(\tilde{R}_{o|k}))$,

$w_{V}=-\left[w_{\Omega}\right]_{\times}\hat{P}_{k-1}-k_{o2}\tilde{P}_{o|k}$,
and

$w_{a}=-\frac{\Im_{k}}{m}\hat{R}_{k-1}^{\top}(\mathbf{I}_{3}-\tilde{R}_{o|k}^{\top})e_{3}-ge_{3}-\left[w_{\Omega}\right]_{\times}\hat{V}_{k-1}-k_{o3}\tilde{P}_{o|k}$\vspace{0.2cm}

\textcolor{blue}{\textbf{Step 7.}} (Correction) $W=\left[\begin{array}{ccc}
	\left[w_{\Omega}\right]_{\times} & w_{V} & w_{a}\\
	0_{1\times3} & 0 & 0\\
	0_{1\times3} & 1 & 0
\end{array}\right]\in\mathcal{U}_{\mathcal{M}}$ and
\[
\hat{X}_{k}=\exp(-W\Delta t)\hat{X}_{k|k-1}
\]
where $\hat{P}_{k}=\hat{X}_{k}(1:3,4)$, $\hat{V}_{k}=\hat{X}_{k}(1:3,5)$,
and $\hat{R}_{k}=\hat{X}_{k}(1:3,1:3)^{\top}$.\vspace{0.2cm}

\textcolor{blue}{\textbf{Step 8.}} Follow the Appendix to evaluate the derivatives of
the intermediary control inputs $\dot{F}_{k}$ and $\ddot{F}_{k}$.
Also, $\Xi(F_{k})$ is evaluated as in \eqref{eq:VTOL_Q_Lambda} along
with its derivative $\dot{\Xi}(F)$.\vspace{0.2cm}

\textcolor{blue}{\textbf{Step 9.}} (Desired attitude) The desired unit-quaternion is
\[
q_{d0|k}=\sqrt{\frac{m}{2\Im_{k}}(g-f_{3})+\frac{1}{2}},\hspace{1em}q_{d|k}=\left[\begin{array}{c}
	\frac{m}{2\Im_{k}q_{d0}}f_{2}\\
	-\frac{m}{2\Im_{k}q_{d0}}f_{1}\\
	0
\end{array}\right]
\]
where $R_{d|k}=(q_{d0|k}^{2}-||q_{d|k}||^{2})\mathbf{I}_{3}+2q_{d|k}q_{d|k}^{\top}-2q_{d0|k}[q_{d|k}]_{\times}\in\mathbb{SO}\left(3\right)$.

\textcolor{blue}{\textbf{Step 10.}} (Desired angular velocity) Calculate $\Omega_{d|k}=\Xi(F_{k})\dot{F}_{k}$
and $\dot{\Omega}_{d|k}=\dot{\Xi}(F_{k})\dot{F}_{k}+\Xi(F_{k})\ddot{F}_{k}$
as in \eqref{eq:VTOL_Q_Omd} and \eqref{eq:VTOL_Q_Omd_dot}, respectively.\vspace{0.2cm}

\textcolor{blue}{\textbf{Step 11.}} (Rotational torque) Attitude error is evaluated
by $\tilde{R}_{c|k}=\hat{R}_{k}R_{d|k}^{\top}$ and the rotational
torque is calculated as
\begin{align*}
	\mathcal{T}_{k}= & k_{c1}\mathbf{vex}(\boldsymbol{\mathcal{P}}_{a}(\tilde{R}_{c|k}))-k_{c2}(\tilde{R}_{o|k}\hat{\Omega}_{k}-\tilde{R}_{c|k}\Omega_{d|k})\\
	& +\left[\tilde{R}_{c|k}\Omega_{d|k}\right]_{\times}J\tilde{R}_{c|k}\Omega_{d|k}+J\tilde{R}_{c|k}\dot{\Omega}_{d|k}
\end{align*}

\textcolor{blue}{\textbf{Step 12.}} (Angular velocity estimate) The angular velocity
estimate is evaluated by
\[
\hat{\Omega}_{k+1}=\hat{\Omega}_{k}+\Delta t\hat{J}^{-1}([\hat{J}\hat{\Omega}_{k}]_{\times}\hat{\Omega}_{k}+\hat{\mathcal{T}}_{k}-\hat{J}[\hat{\Omega}_{k}]_{\times}\hat{R}_{k}w_{\Omega}+w_{o})
\]
where $\hat{\mathcal{T}}_{k}=\tilde{R}_{o|k}^{\top}\mathcal{T}_{k}$
and $\hat{J}=\tilde{R}_{o|k}^{\top}J\tilde{R}_{o|k}$.\vspace{0.2cm}

\textcolor{blue}{\textbf{Step 13.}} Set $k=k+1$, and go to \textcolor{blue}{\textbf{Step 2}}.

\section{Simulation and Experimental Results \label{sec:SE3_Simulations}}

\subsection{Simulation Results}

This subsection presents the output performance of the proposed observer-based
controller for a 6 DoF VTOL-UAV. The observing and tracking control
capabilities are tested in a discrete form at a low sampling rate
of 1000 Hz against unknown random noise and constant bias corrupting
the measurements. Consider the mass and the inertia matrix of the
VTOL-UAV to be $m=2.5\,\text{kg}$ and $J={\rm diag}(0.14,0.2,0.12)\,\text{kg}.\text{m}^{2}$,
respectively. Let the desired trajectory be 
\[
P_{d}=6\left[\begin{array}{c}
	\sin(0.2t)\\
	\sin(0.2t)\cos(0.2t)\\
	\frac{1}{6}(4+0.15t)
\end{array}\right]\,\text{m}
\]
The total time is set to 50 seconds. Let the true initial orientation,
angular velocity, position, and linear velocity of the VTOL-UAV be
\begin{align*}
	R_{0} & =\left[\begin{array}{ccc}
		-0.2712 & -0.7130 & 0.6466\\
		0.8655 & -0.4746 & -0.1603\\
		0.4212 & 0.5162 & 0.7458
	\end{array}\right]\\
	\Omega_{0} & =[0,0,0]^{\top}\,\text{rad/sec}\\
	P_{0} & =[-2,-1,0]^{\top}\,\text{m}\\
	V_{0} & =[0,0,0]^{\top}\,\text{m/\ensuremath{\sec}}
\end{align*}
Let the estimated initial orientation, angular velocity, position,
and linear velocity of the vehicle be
\begin{align*}
	\hat{R}_{0} & =\mathbf{I}_{3}\\
	\hat{\Omega}_{0} & =\hat{P}_{0}=\hat{V}_{0}=[0,0,0]^{\top}
\end{align*}
Let $\hat{\theta}_{0}=\dot{\hat{\theta}}_{0}=[0,0,0]^{\top}$. From
\eqref{eq:VTOL_VR}, $y_{i}^{\mathcal{B}}=R{\rm v}_{i}^{\mathcal{I}}+b_{i}^{\mathcal{B}}+n_{i}^{\mathcal{B}}$
for $i=1,2$, let 
\[
\begin{cases}
	{\rm v}_{1}^{\mathcal{I}} & =[0,0,1]^{\top}\\
	{\rm v}_{2}^{\mathcal{I}} & =[1,-1,-1]^{\top}
\end{cases}
\]
\begin{figure*}
	\centering{}\includegraphics[scale=0.35]{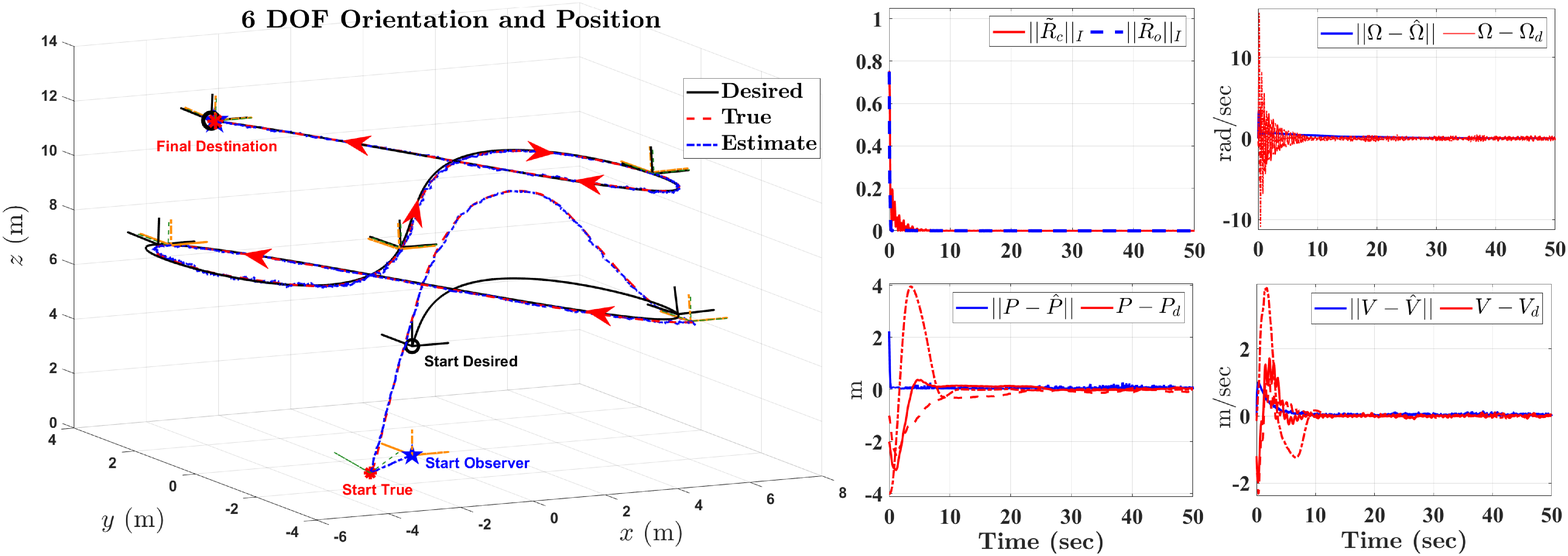}\caption{Output performance of the observer-based controller implemented using
		a 6 DoF VTOL-UAV. The left portion illustrates the desired position
		(black solid line), the true position (red dashed line), and the estimated
		position (blue center line). The VTOL-UAV orientation (roll, yaw,
		and pitch) is shown as a black solid line, a green dashed line, and
		an orange center line corresponding to the desired, true, and estimated
		orientation, respectively. Convergence of the error trajectories is
		demonstrated in the right portion where blue and red denote the error
		components between the true and the observed data, and between the
		true and the desired data, respectively.}
	\label{fig:ObsvCont}
\end{figure*}
with $b_{1}^{\mathcal{B}}=[0,0,-0.15]^{\top}$, $b_{2}^{\mathcal{B}}=[0.1,0.09,-0.11]^{\top}$,
and normally distributed noise $n_{1}^{\mathcal{B}}$ and $n_{2}^{\mathcal{B}}$
having a zero mean and a STD of 0.05, in other words $\mathcal{N}(0,0.05)$. The
attitude is reconstructed using SVD \cite{markley1988attitude} where
${\rm v}_{3}^{\mathcal{I}}={\rm v}_{1}^{\mathcal{I}}\times{\rm v}_{2}^{\mathcal{I}}$
and $y_{3}^{\mathcal{B}}=y_{1}^{\mathcal{B}}\times y_{2}^{\mathcal{B}}$:
\begin{equation}
	\begin{cases}
		{\bf r}_{i} & =\frac{{\rm v}_{i}^{\mathcal{I}}}{||{\rm v}_{i}^{\mathcal{I}}||},\hspace{1em}{\bf y}_{i}=\frac{y_{i}^{\mathcal{B}}}{||y_{i}^{\mathcal{B}}||},\hspace{1em}i=1,2,\ldots,N\\
		B & =\sum_{i=1}^{n}s_{i}{\bf y}_{i}{\bf r}_{i}^{\top}=USV^{\top}\\
		U_{+} & =U\cdot diag(1,1,\det(U))\\
		V_{+} & =V\cdot diag(1,1,\det(V))\\
		R_{y} & =V_{+}U_{+}^{\top}
	\end{cases}\label{eq:SVD}
\end{equation}
Consider a group of seven non-collinear randomly distributed landmarks ($N_{2}=7$)
satisfying Assumption \ref{Assum:VTOL_1Landmark} item A1. Let $s_{j}=1\forall j=1,2,\ldots,7$ and $s_{c}=\sum_{j=1}^{N_{2}}s_{j}$. The landmark
measurements are defined as in \eqref{eq:VTOL_VRP} and incorporate
added constant bias and normally distributed noise ($\mathcal{N}(0,0.05)$).
The position at each time instant is reconstructed as (see \eqref{eq:VTOL_R_Weighted}):
\begin{equation}
	\begin{cases}
		p_{c} & =\frac{1}{s_{c}}\sum_{j=1}^{N_{2}}s_{j}p_{j}^{\mathcal{I}},\hspace{1em}z_{c}=\frac{1}{s_{c}}\sum_{j=1}^{N_{2}}s_{j}z_{j}^{\mathcal{B}}\\
		P_{y} & =p_{c}-R_{y}^{\top}z_{c}
	\end{cases}\label{eq:Rec}
\end{equation}
 For more details of attitude and pose reconstruction visit \cite{hashim2019SO3Wiley, hashim2020SE3Stochastic}. Let the design parameters be selected as follows: $\gamma_{o}=0.1$,
$k_{o1}=10$, $k_{o2}=10$, $k_{o3}=5$, $k_{c1}=10$, $k_{c2}=0.1$,
$k_{c1}=2$, $k_{c2}=4$, $k_{\theta1}=1$, and $k_{\theta2}=1$.

Fig. \ref{fig:ObsvCont} demonstrates the output performance of
a VTOL-UAV guided by the observer-based controller. Fig. \ref{fig:ObsvCont}
shows robust, strong, fast, and smooth tracking performance of the
proposed observer-based controller and its ability to guide the VTOL-UAV
to the desired final destination, despite a large initial error. The
error is shown to rapidly converge from large initial values to the
neighborhood of the attractive equilibrium point. Fig. \ref{fig:ControlInput}
depicts the bounded rotational torque input and the thrust input.

\begin{figure}
	\centering{}\includegraphics[scale=0.32]{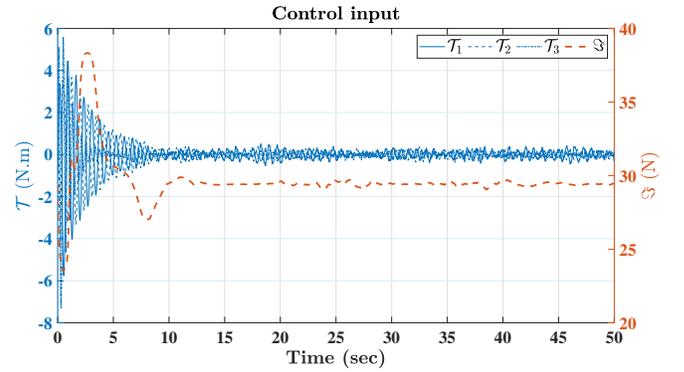}\caption{VTOL-UAV control input.}
	\label{fig:ControlInput}
\end{figure}

\subsection{Experimental Results}

To further validate the observing capabilities, the proposed observer with the appropriate modification 
has been tested using the EuRoC real-world dataset \cite{burri2016euroc}
that includes the ground truth of a real-life quadrotor flight trajectory,
stereo images, and IMU data. The ADIS16448 IMU collected data at a
sampling rate of 200 Hz. The MT9V034 sensor collected stereo images
at a sampling rate of 20 Hz which were subsequently undistorted with
the camera parameters and calibrated using a Stereo Camera Calibrator
in MATLAB. For more details about the EuRoC dataset visit \cite{burri2016euroc}.
The landmarks were tracked using minimum eigenvalue landmark detection
through Kanade-Lucas-Tomasi (KLT) feature tracker \cite{shi1994good},
see Fig. \ref{fig:Features}. Due to the fact that the dataset had
no landmark information, a set of landmarks was generated from the
stereo images using $p_{j}^{\mathcal{I}}=R^{\top}z_{j}^{\mathcal{B}}+P$
where $P$ and $R$ denote ground truth position and orientation,
respectively%
. For the purposes of the experiment, the maximum number of detected
landmarks was limited to 50. The coordinates of the landmark camera
frame (cam0 EuRoC dataset) were transformed to the vehicle frame using
the calibration matrix included in the dataset.

\begin{figure}
	\centering{} \includegraphics[scale=0.2]{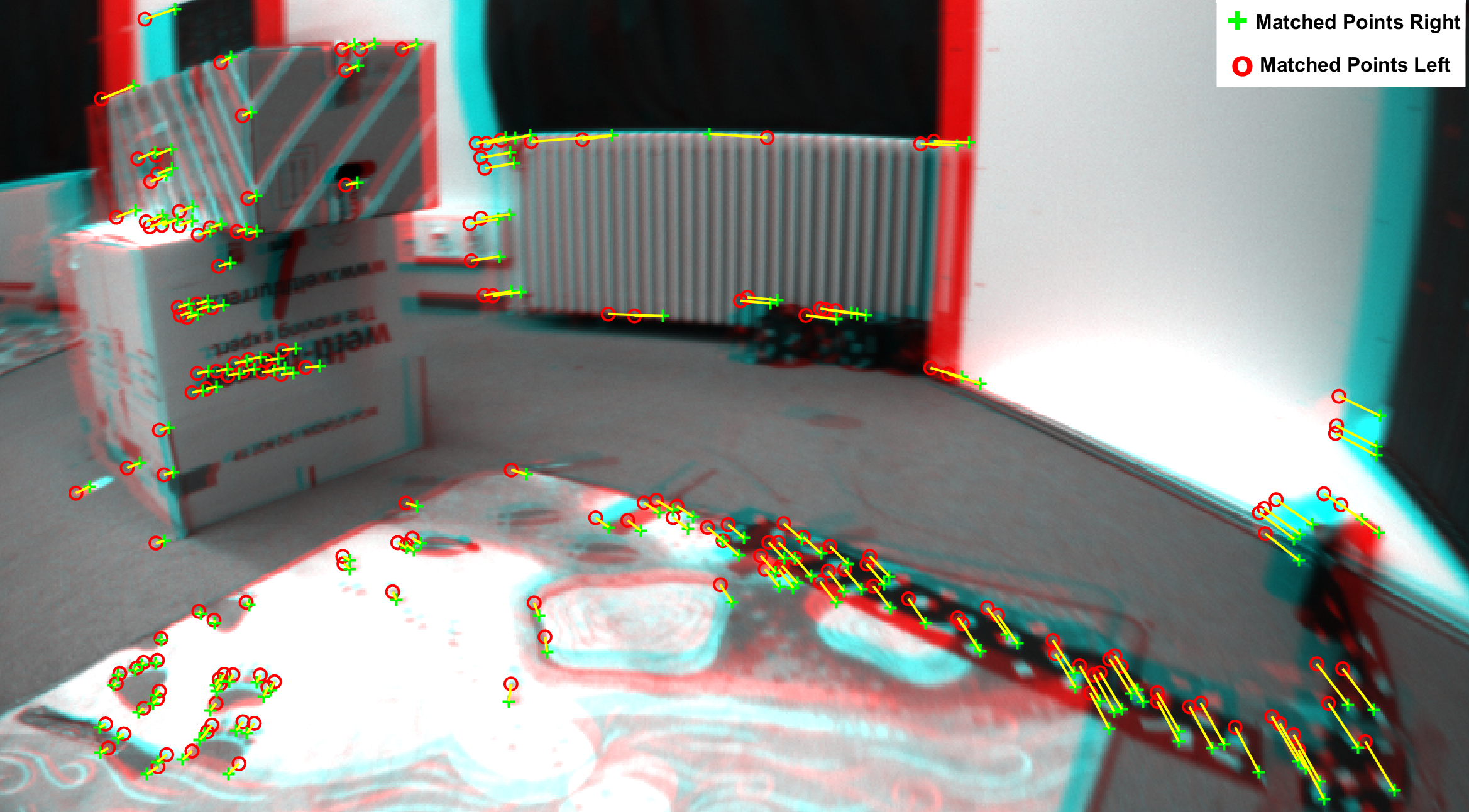}\caption{An example of landmark detection and tracking. Photographs are obtained
		from the EuRoC dataset \cite{burri2016euroc}.}
	\label{fig:Features}
\end{figure}

The experimental results shown in Fig. \ref{fig:Obsv} reveal strong
tacking capability of the proposed observer demonstrating the robustness
of the proposed approach. To summarize, Fig. \ref{fig:ObsvCont}
and \ref{fig:Obsv} illustrate the effectiveness of the proposed observer-based
controller to observe the unknown motion parameters, namely orientation,
angular velocity, position, and linear velocity, while tracking the
vehicle along the desired trajectory.

\begin{figure}
	\centering{}\includegraphics[scale=0.29]{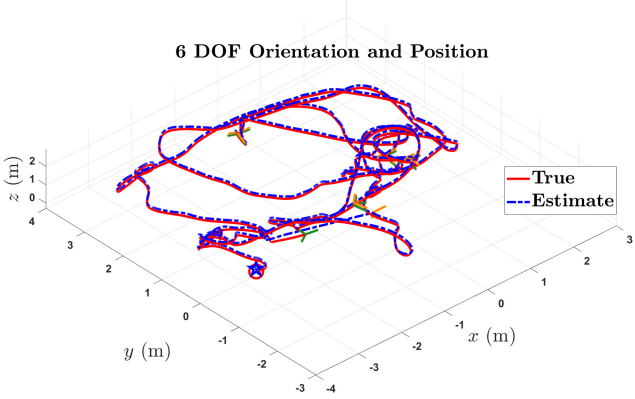}\caption{Experimental validation using the Vicon Room 2 02 dataset. Output
		performance of the proposed observer for a 6 DoF UAV.}
	\label{fig:Obsv}
\end{figure}

\section{Conclusion \label{sec:SE3_Conclusion}}

This paper addressed the estimation and control of the motion parameters,
namely attitude, angular velocity, position, and linear velocity,
in application to a six degrees of freedom (6 DoF) Vertical Take-Off
and Landing Unmanned Aerial Vehicle (VTOL-UAV). The newly proposed
observer-based controller as well as the observer that lies at its
foundation are both characterized by almost global exponential stability
of the closed loop error signals regardless of the initial condition.
The proposed approach does not require a gyroscope and Global Positioning
Systems signals. Use of measurements obtained by a low-cost measurement
unit at a low sampling rate do not compromise the performance of the
proposed observer-based controller. On the contrary, as has been revealed
by simulation and experimental results, the proposed approach is distinguished
by accurate observation and robust tracking control to the desired
trajectory of the VTOL-UAV motion parameters, namely attitude, angular
velocity, position, and linear velocity.

\section*{Acknowledgment}
	The authors would like to thank \textbf{Maria Shaposhnikova} for proofreading
	the article.

\subsection*{Appendix A\label{subsec:Appendix-A}}

Let $\theta\in\mathbb{R}^{3}$ be an auxiliary variable defined in
\eqref{eq:VTOL_Q_E}. Define the mapping of 
\[
\psi(\theta_{i})=\frac{\exp(\theta_{i})-\exp(-\theta_{i})}{\exp(\theta_{i})+\exp(-\theta_{i})},\hspace{1em}\forall i=1,2,3
\]
and 
\[
\psi(\dot{\theta}_{i})=\frac{\exp(\dot{\theta}_{i})-\exp(-\dot{\theta}_{i})}{\exp(\dot{\theta}_{i})+\exp(-\dot{\theta}_{i})},\hspace{1em}\forall i=1,2,3
\]
such that $\psi(\theta)=[\psi(\theta_{1}),\psi(\theta_{2}),\psi(\theta_{3})]^{\top}\in\mathbb{R}^{3}$
and $\psi(\dot{\theta})=[\psi(\dot{\theta}_{1}),\psi(\dot{\theta}_{2}),\psi(\dot{\theta}_{3})]^{\top}\in\mathbb{R}^{3}$.
One can easily show that $\frac{d}{dt}\psi(\theta_{i})=(1-\psi(\theta_{i})^{2})\dot{\theta}_{i}$,
$\frac{d^{2}}{dt^{2}}\psi(\theta_{i})=(1-\psi(\theta_{i})^{2})(\ddot{\theta}_{i}-2\psi(\theta_{i})\dot{\theta}_{i}^{2})$,
$\frac{d}{dt}\psi(\dot{\theta}_{i})=(1-\psi(\dot{\theta}_{i})^{2})\ddot{\theta}_{i}$,
and $\frac{d^{2}}{dt^{2}}\psi(\dot{\theta}_{i})=(1-\psi(\dot{\theta}_{i})^{2})(\theta_{i}^{(3)}-2\psi(\dot{\theta}_{i})\ddot{\theta}_{i}^{2})$.
Therefore, the first and the second derivatives of $F$ are as follows:
\begin{align*}
	\dot{F} & =P_{d}^{(3)}-k_{\theta1}\frac{d}{dt}\psi(\theta)-k_{\theta2}\frac{d}{dt}\psi(\dot{\theta})\\
	\ddot{F} & =P_{d}^{(4)}-k_{\theta1}\frac{d^{2}}{dt^{2}}\psi(\theta)-k_{\theta2}\frac{d^{2}}{dt^{2}}\psi(\dot{\theta})
\end{align*}
with
\begin{align*}
	\theta^{(3)}= & -k_{\theta1}\frac{d}{dt}\psi(\theta)-k_{\theta2}\frac{d}{dt}\psi(\dot{\theta})+k_{c1}(\dot{\hat{P}}-\dot{P}_{d}-\dot{\theta})\\
	& +k_{c2}(\dot{\hat{V}}-\dot{V}_{d}-\ddot{\theta})
\end{align*}
where $P_{d}^{(3)}=\frac{d^{3}}{dt^{3}}P_{d}$, $P_{d}^{(4)}=\frac{d^{4}}{dt^{4}}P_{d}$,
and $\theta^{(3)}=\frac{d^{3}}{dt^{3}}\theta$. In addition, $\dot{\alpha}_{1}=\frac{1}{\alpha_{1}}[f_{1},f_{2},(f_{3}-g)]^{\top}\dot{F}$
and $\dot{\alpha}_{2}=\dot{\alpha}_{1}-\dot{f}_{3}$ with $\dot{F}=[\dot{f}_{1},\dot{f}_{2},\dot{f}_{3}]^{\top}$.

\subsection*{Appendix B\label{subsec:Appendix-B}}
\begin{center}
	\textbf{Quaternion Representation of the Observer-based Controller}
	\par\end{center}

\noindent Recall Section \ref{sec:Preliminaries-and-Math} and let
$Q_{y}=[q_{y0},q_{y}^{\top}]^{\top}\in\mathbb{S}^{3}$ be the reconstructed
attitude, obtained for instance, using QUEST algorithm \cite{shuster1981three}.
Define the reconstructed attitude $\mathcal{R}_{y}:\mathbb{S}^{3}\rightarrow\mathbb{SO}\left(3\right)$
as (see \eqref{eq:NAV_Append_SO3}):
\[
\mathcal{R}_{y}=(q_{y0}^{2}-||q_{y}||^{2})\mathbf{I}_{3}+2q_{y}q_{y}^{\top}-2q_{y0}[q_{y}]_{\times}\in\mathbb{SO}\left(3\right)
\]
Define $\hat{Q}=[\hat{q}_{0},\hat{q}^{\top}]^{\top}\in\mathbb{S}^{3}$
as the estimate of $Q=[q_{0},q^{\top}]^{\top}\in\mathbb{S}^{3}$.
Define the estimated attitude $\hat{\mathcal{R}}:\mathbb{S}^{3}\rightarrow\mathbb{SO}\left(3\right)$
as
\[
\hat{\mathcal{R}}=(\hat{q}_{0}^{2}-||\hat{q}||^{2})\mathbf{I}_{3}+2\hat{q}\hat{q}^{\top}-2\hat{q}_{0}[\hat{q}]_{\times}\in\mathbb{SO}\left(3\right)
\]
Let the error in estimation be $\tilde{Q}_{o}=\hat{Q}^{-1}\odot Q_{y}=[\tilde{q}_{o0},\tilde{q}_{o}^{\top}]^{\top}\in\mathbb{S}^{3}$.
Define the error between the estimated and the true attitude $\tilde{\mathcal{R}}_{o}:\mathbb{S}^{3}\rightarrow\mathbb{SO}\left(3\right)$
as
\[
\tilde{\mathcal{R}}_{o}=(\tilde{q}_{o0}^{2}-||\tilde{q}_{o}||^{2})\mathbf{I}_{3}+2\tilde{q}_{o}\tilde{q}_{o}^{\top}-2\tilde{q}_{o0}[\tilde{q}_{o}]_{\times}\in\mathbb{SO}\left(3\right)
\]
Let the error in control be $\tilde{Q}_{c}=Q_{d}^{-1}\odot\hat{Q}=[\tilde{q}_{c0},\tilde{q}_{c}^{\top}]^{\top}\in\mathbb{S}^{3}$
and define the error between the desired and the true attitude $\tilde{\mathcal{R}}_{c}:\mathbb{S}^{3}\rightarrow\mathbb{SO}\left(3\right)$
as
\[
\tilde{\mathcal{R}}_{c}=(\tilde{q}_{c0}^{2}-||\tilde{q}_{c}||^{2})\mathbf{I}_{3}+2\tilde{q}_{c}\tilde{q}_{c}^{\top}-2\tilde{q}_{c0}[\tilde{q}_{c}]_{\times}\in\mathbb{SO}\left(3\right)
\]
The quaternion representation of the observer in \eqref{eq:VTOL_ObsvCompact}-\eqref{eq:VTOL_Vestdot}
is as below:
\[
\begin{cases}
	\Phi & =\left[\begin{array}{cc}
		0 & -\hat{\Omega}^{\top}\\
		\hat{\Omega} & -[\hat{\Omega}]_{\times}
	\end{array}\right],\hspace{1em}\Psi=\left[\begin{array}{cc}
		0 & -w_{\Omega}^{\top}\\
		w_{\Omega} & [w_{\Omega}]_{\times}
	\end{array}\right]\\
	\dot{\hat{Q}} & =\frac{1}{2}(\Phi-\Psi)\hat{Q}\\
	\hat{J}\dot{\hat{\Omega}} & =[\hat{J}\hat{\Omega}]_{\times}\hat{\Omega}+\hat{\mathcal{T}}-\hat{J}[\hat{\Omega}]_{\times}\hat{\mathcal{R}}w_{\Omega}+w_{o}\\
	\dot{\hat{P}} & =\hat{V}-[w_{\Omega}]_{\times}\hat{P}-w_{V}\\
	\dot{\hat{V}} & =-\frac{\Im}{m}\hat{\mathcal{R}}^{\top}e_{3}-[w_{\Omega}]_{\times}\hat{V}-w_{a}
\end{cases}
\]
where $\hat{\mathcal{T}}=\tilde{\mathcal{R}}_{o}^{\top}\mathcal{T}$,
$\hat{J}=\tilde{\mathcal{R}}_{o}^{\top}J\tilde{\mathcal{R}}_{o}$,
and 
\[
\begin{cases}
	w_{o} & =-2\gamma_{o}\tilde{q}_{o0}\tilde{\mathcal{R}}_{o}^{\top}\tilde{q}_{o}\\
	w_{\Omega} & =2k_{o1}\tilde{q}_{o0}\mathcal{R}_{y}^{\top}\tilde{q}_{o}\\
	w_{V} & =-\left[w_{\Omega}\right]_{\times}\hat{P}-k_{o2}\tilde{P}_{o},\hspace{1em}\tilde{P}_{o}=P_{y}-\hat{P}\\
	w_{a} & =-\frac{\Im}{m}\hat{\mathcal{R}}^{\top}(\mathbf{I}_{3}-\tilde{\mathcal{R}}_{o}^{\top})e_{3}-ge_{3}-\left[w_{\Omega}\right]_{\times}\hat{V}-k_{o3}\tilde{P}_{o}
\end{cases}
\]
where $\boldsymbol{\Upsilon}(\tilde{\mathcal{R}}_{o})=2\tilde{q}_{o0}\tilde{q}_{o}$
(see \cite{hashim2019AtiitudeSurvey,shuster1993survey}). The quaternion
representation of the control laws in \eqref{eq:VTOL_Tau}-\eqref{eq:VTOL_Thrust}
is as below:
\[
\begin{cases}
	\mathcal{T} & =2k_{c1}\tilde{q}_{c0}\tilde{q}_{c}-k_{c2}(\tilde{\mathcal{R}}_{o}\hat{\Omega}-\tilde{\mathcal{R}}_{c}\Omega_{d})+J\tilde{\mathcal{R}}_{c}\dot{\Omega}_{d}\\
	& \hspace{1em}+[\tilde{\mathcal{R}}_{c}\Omega_{d}]_{\times}J\tilde{\mathcal{R}}_{c}\Omega_{d}\\
	\ddot{\theta} & =-k_{\theta1}\psi(\theta)-k_{\theta2}\psi(\dot{\theta})+k_{c3}(\hat{P}-P_{d}-\theta)\\
	& \hspace{1em}+k_{c4}(\hat{V}-V_{d}-\dot{\theta})\\
	F & =\ddot{P}_{d}-k_{\theta1}\psi(\theta)-k_{\theta2}\psi(\dot{\theta})\\
	\Im & =m||ge_{3}-F||
\end{cases}
\]
where $\boldsymbol{\Upsilon}(\tilde{\mathcal{R}}_{c})=2\tilde{q}_{c0}\tilde{q}_{c}$.
To know more about attitude paramterization and mapping visit \cite{hashim2019AtiitudeSurvey}.

\bibliographystyle{IEEEtran}
\bibliography{bib_VTOL_ObsvCont}

%
%
%
%
%
%

\end{document}